\documentclass[12pt,a4paper]{article}
\usepackage{amsmath,amsthm,amsfonts,amssymb,bbm}
\usepackage{graphicx,psfrag,subfigure,color}
\usepackage{cite}
\usepackage{hyperref}

\addtolength{\textheight}{1.6cm}
\setlength{\voffset}{-0.8cm}
\setlength{\hoffset}{-0.5cm}
\addtolength{\textwidth}{1cm}

\numberwithin{equation}{section}

\newcommand{\Or}{\mathcal{O}}

\newcommand{\Pb}{\mathbb{P}}

\newcommand{\Id}{\mathbbm{1}}
\newcommand{\e}{\varepsilon}
\newcommand{\I}{{\rm i}}

\newcommand{\R}{\mathbb{R}}

\newcommand{\Z}{\mathbb{Z}}
\renewcommand{\Re}{\mathrm{Re}}
\renewcommand{\Im}{\mathrm{Im}}

\DeclareMathOperator{\Tr}{Tr}

\newtheorem{prop}{Proposition}[section]
\newtheorem{thm}[prop]{Theorem}
\newtheorem{lem}[prop]{Lemma}
\newtheorem{defin}[prop]{Definition}
\newtheorem{cor}[prop]{Corollary}

\newtheorem{cla}[prop]{Claim}

\newtheorem{rem}[prop]{Remark}

\title{Finite GUE distribution with cut-off at a shock}
\author{P.L. Ferrari\thanks{Institute for Applied Mathematics, Bonn University, Endenicher Allee 60, 53115 Bonn, Germany. E-mail: {\tt ferrari@uni-bonn.de}}}
\date{March 3, 2018}

\begin{document}
\sloppy
\maketitle

\begin{abstract}
We consider the totally asymmetric simple exclusion process with initial conditions generating a shock. The fluctuations of particle positions are asymptotically governed by the randomness around the two characteristic lines joining at the shock. Unlike in previous papers, we describe the correlation in space-time \emph{without} employing the mapping to the last passage percolation, which fails to exists already for the partially asymmetric model. We then consider a special case, where the asymptotic distribution is a cut-off of the distribution of the largest eigenvalue of a finite GUE matrix. Finally we discuss the strength of the probabilistic and physically motivated approach and compare it with the mathematical difficulties of a direct computation.
\end{abstract}

\section{Introduction}\label{sectIntro}
The totally asymmetric simple exclusion process (TASEP) is one of the simplest non-reversible interacting particle system. The occupation variables of the TASEP are denoted by $\eta_j$, $j\in\Z$, where $\eta_j=1$ if site $j$ is occupied by a particle and $\eta_j=0$ if it is empty. The time evolution of  TASEP is the following. Particles jump one step to the right and are allowed to do so only if their right neighboring site is empty. Jumps are independent of each other and are performed after an exponential waiting time with mean $1$, which starts from the time instant when the right neighbor site is empty.

Since particles can not overtake each other, we can associate a labeling to them and denote the position of particle $k$ at time $t$ by $x_k(t)$. We choose the right-to-left ordering, namely, we denote by $x_{k+1}(0)< x_k(0)$ for any $k$. For any initial condition with $N$ particles at non-random positions, $x_N(0)<x_{N-1}(0)<\ldots<x_1(0)$, by Theorem~2.1 of~\cite{BFPS06}, we know that
\begin{equation}\label{eq1}
\Pb(x_N(t)\geq x)=\det(\Id-K_{N,t})_{\ell^2((-\infty,x))}
\end{equation}
for some correlation kernel $K_{N,t}$ depending on $\{x_n(0),n=1,\ldots,N\}$ and $t$ (and similarly for joint distributions of particles).

To determine the correlation kernel one has to do a biorthogonalization. This has been successfully carried out for several initial condition. Large time asymptotic analysis for different initial conditions, leads to the discovery of limiting processes as the Airy$_1$ or the Airy$_{2\to 1}$ transition process for TASEP in continuous or discrete time and some generalizations like PushASEP and/or with particle-dependent jump rates~\cite{Sas07,BFPS06,BFP06,BF07,BFS07b}. A Fredholm determinant expression for the distribution of particle positions is available also for some random initial conditions. For instance, using Burke's theorem~\cite{Bur56}, one can replace a Bernoulli-product measure with density $\alpha$ to the right of the origin with a single particle with reduced jump rate $1-\alpha$, see e.g.~\cite{PS01}. Finally, the case of Bernoulli-product measure to the left of the origin does also fit in the determinantal framework, see~\cite{FSW15}. Very recently, a smart way to do the biorthogonalization for general non-random initial, which can be used for asymptotic analysis, has been discovered~\cite{MQR17}.

The formula (\ref{eq1}) has been successfully analyzed in the large time limit whenever, under appropriate scaling limit, the kernel itself converges to a limiting kernel. However, in all the cases where the system generates a shock, i.e., a discontinuity in the particle density, a direct computation using (\ref{eq1}) does not lead to results due to the fact that $K_t$ did not converge to a limiting kernel (although the Fredholm determinant still being well-defined). A way out in these case was found in~\cite{FN13} where we reduced the analysis to simpler cases. The core of the proof is then probabilistic and based on the following two ingredients, reflecting physical behavior of the system:
\begin{itemize}
\item[(1)] A shock is a position where two characteristic lines meet. Further, positions of particles in space-time with time difference of order $t$ are non-trivially correlated if they are in a $t^{2/3}$ neighborhood of a given characteristics. This means that particles close to the shock at time $t$ are non-trivially correlated with two initial configurations, one of each side of the shock, as discussed in Section~\ref{SectAsymptIndep}.
\item[(2)] Along the characteristics, decorrelation occurs only over macroscopic time differences (called slow-decorrelation phenomenon~\cite{Fer07,CFP10b}). This means that the fluctuations of particles close to a given characteristic at time $t-o(t)$, which typically live on a $t^{1/3}$ scale, and the fluctuations of the particles close to the same characteristic at time $t$ will differ only by $o(t^{1/3})$. This is discussed in Section~\ref{SectSlowDec}.
\end{itemize}
In particular, if we take time $t-t^\nu$ with $\nu\in (2/3,1)$, then the particles close to the two characteristics which meet at time $t$ are at distance at least $\Or(t^\nu)\gg t^{2/3}$ and thus asymptotically independent, see Figure~\ref{FigShockTASEP} for an illustration.
\begin{figure}
\begin{center}
\psfrag{x}[cc]{$x$}
\psfrag{t}[lb]{$t$}
\psfrag{tau}[lb]{$t-t^\nu$}
\psfrag{E}[cc]{$E$}
\psfrag{El}[cb]{$E_\ell$}
\psfrag{Er}[cb]{$E_r$}
\includegraphics[height=5cm]{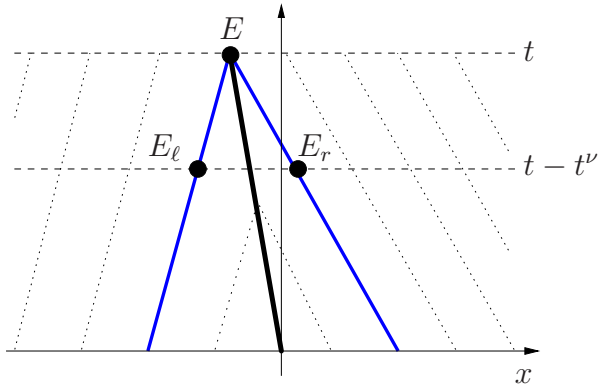}
\caption{Illustration of the characteristics for TASEP. $E=(x,t)$ is the shock location, where two characteristics with speed $v_\ell>v_r$ merge (the thick blue lines). Due to the slow decorrelation along characteristics, at large time $t$ the fluctuations at $E$ are inherited by the ones at $E_\ell=(x-v_\ell t^\nu,t-t^\nu)$ and $E_r=(x-v_r t^\nu,t-t^\nu)$ up to $o(t^{1/3})$ if $\nu<1$.}
\label{FigShockTASEP}
\end{center}
\end{figure}

This procedure of reduction to two simpler models, for which the distribution function was given by a Fredholm determinant with a convergent kernel, was so-far employed by using the useful connection with the last passage percolation (LPP) model. This leads to the results on the fluctuation of the particle positions around the shock~\cite{FN13}, the non-trivial results on the competition interface~\cite{FN16}, and the second class particles~\cite{FGN17} with \emph{non-random} initial conditions. Whenever the fluctuations of the bulk of the LPP are on a smaller scale than the boundary terms, or one on the two LPP dominates the other, the analysis is simpler and one can apply the bootstrap argument of~\cite{BC09}.

The procedure illustrated above has been used by Nejjar in~\cite{N17} to show decoupling under the double scaling limit, where one first takes the large time limit with a microscopic shock (of size $\beta t^{-1/3}$ so that the kernel is well-defined~\cite{FN14}) and then $\beta\to\infty$. Very recently, Quastel and Rahman managed through a nice decomposition to handle the problem of the double scaling limit from a Fredholm determinant approach~\cite{QR18}.

The map from TASEP to LPP is nice since one has a geometric picture instead of a dynamic one. However, since this map holds only for totally asymmetry of the jumps, one would like to be able to describe the space-time correlations without the LPP picture. On the other hand, slow-decorrelation still holds for the partially asymmetric simple exclusion process (PASEP)~\cite{CFP10b}. Its proof is essentially based on the attractiveness of the model. In Section~\ref{SectSplitting} we show how the splitting into two simpler models follows from the basic coupling \emph{without using the LPP mapping}, and thus leading the way to generalizations to models like PASEP where the LPP mapping is not available.

To illustrate how to obtain results on the correlation in space-time, we consider a concrete case for which we can prove a new result as well. We consider particles with jump rate $1$ starting from every even site of $\Z_-$ and $M$ particles with jump rate $\alpha$ densely packed to the right of the origin. For $\alpha<1/2$ a shock is created. For $M=1$ the limiting distribution of particles around the shock were stated in Proposition~1 of~\cite{BFS09}. The proof was however not complete and if one tries to work out the details one sees that the kernel is not converging to a limit. We explain the mathematical difficulties of a direct computation for this case in Section~\ref{SectComparison}. Theorem~\ref{ThmMain} is the generalization to $M$ slow particle of Proposition~1 of~\cite{BFS09}. Its simple proof is in Section~\ref{SectProofMainThm}.

\subsubsection*{Acknowledgments}
P.L.~Ferrari is supported by the German Research Foundation in the Collaborative Research Center 1060 "The Mathematics of Emergent Effects", project B04.

\section{Model and main result}
We consider TASEP with initial condition
\begin{equation}\label{eqICx}
x_n(0)=
\left\{\begin{array}{ll}
-2n, & n\geq 1,\\
-n,& 0\geq n\geq -M+1,
\end{array}
\right.
\end{equation}
and where particle with label $n$ has jump rate $v_n$ with
\begin{equation}\label{eqICv}
v_n=\left\{\begin{array}{ll}
1, & n\geq 1,\\
\alpha,& 0\geq n\geq -M+1,
\end{array}
\right.
\end{equation}
with $\alpha\in (0,1/2)$.

In this case the slow particles create a shock with density $1-\alpha$ inside the jammed region and to the left of the shock the density of particle is $1/2$ as at time $0$. A simple macroscopic computation gives that the speed of the shock is $\mathbf{v}=\alpha-1/2$. Already in the $M=1$ case, which is equivalent to stationary initial condition with density $1-\alpha$ to the right of the origin by Burke's theorem, we expect to have fluctuations of particles inside the shock of order $t^{1/2}$, while to its left only $t^{1/3}$. Finally, the distribution of the left-most slow particle converges under diffusion scaling limit to the one of the largest eigenvalue of a $M\times M$ GUE matrix, see e.g.~\cite{FF12b,GS12}.

\begin{defin}
A GUE(M) random matrix is a $M\times M$ matrix $H$ distributed according to
\begin{equation}
\Pb(H\in dH)={\rm const} e^{-\frac12 \Tr(H^2)} dH,
\end{equation}
with $dH=\prod_{i=1}^M dH_{i,i}\prod_{1\leq j<k\leq M} d\Re(H_{j,k}) d\Im(H_{j,k})$.
\end{defin}
It is well-known that the eigenvalues of a GUE(M) matrix form a determinantal point process, see e.g.~Sect.~5.2 of~\cite{Meh91}. In particular, the distribution of its largest eigenvalue is a Fredholm determinant.
\begin{prop}
The distribution of the largest eigenvalue of a GUE(M) random matrix is given by
\begin{equation}
F_{\rm GUE(M)}(s)=\det(\Id-K_{\rm GUE(M)})_{L^2((s,\infty),dx)}
\end{equation}
where the correlation kernel is given by
\begin{equation}
K_{\rm GUE(M)}(s_1,s_2)=\frac{1}{(2\pi\I)^2}\oint_{|z|=\e/2} dz \int_{\e+\I\R} dw \frac{e^{w^2/2-w s_1}}{e^{z^2/2-z s_2}}\frac{w^M}{z^M}\frac{1}{w-z},
\end{equation}
for any $\e>0$.
\end{prop}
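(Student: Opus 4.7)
The plan is to verify the stated formula by reducing it to the classical Hermite kernel representation of the GUE$(M)$ eigenvalue process. By Sect.~5.2 of~\cite{Meh91}, the correlation kernel of the GUE$(M)$ eigenvalues with respect to Lebesgue measure is
$$K^{\mathrm{std}}_M(s_1,s_2)=\frac{e^{-(s_1^2+s_2^2)/4}}{\sqrt{2\pi}}\sum_{n=0}^{M-1}\frac{\He_n(s_1)\He_n(s_2)}{n!},$$
with $\He_n$ the probabilist Hermite polynomials. Since the Fredholm determinant on $L^2((s,\infty),dx)$ is invariant under a diagonal conjugation $K(s_1,s_2)\mapsto f(s_1)K(s_1,s_2)/f(s_2)$, it suffices to show that the contour integral $K_{\mathrm{GUE}(M)}$ equals $K^{\mathrm{std}}_M$ multiplied by $e^{(s_2^2-s_1^2)/4}$, which corresponds to the choice $f(s)=e^{-s^2/4}$.

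The first step is the algebraic identity
$$\frac{w^M}{z^M(w-z)}=\sum_{j=0}^{M-1}\frac{w^j}{z^{j+1}}+\frac{1}{w-z},$$
which splits the integrand into $M+1$ decoupled pieces. On the chosen contours we have $|z|=\varepsilon/2<\varepsilon\leq|w|$, so the only singularity $z=w$ of the remainder lies outside the $z$-contour; its $z$-integral therefore vanishes by Cauchy's theorem. The problem reduces to evaluating a finite sum of one-dimensional integrals.

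Each $z$-integral is a residue at $0$, and by the Hermite generating function $e^{s_2 z-z^2/2}=\sum_{j\geq 0}\He_j(s_2)z^j/j!$ it equals $\He_j(s_2)/j!$. For the $w$-integral
$$I_j(s_1):=\frac{1}{2\pi\I}\int_{\varepsilon+\I\R}w^j e^{w^2/2-ws_1}\,dw,$$
I would first establish $I_0(s_1)=e^{-s_1^2/2}/\sqrt{2\pi}$ by completing the square $w^2/2-ws_1=(w-s_1)^2/2-s_1^2/2$, shifting the vertical contour, and applying the standard Gaussian integral. Differentiation under the integral via $w^j e^{-ws_1}=(-\partial_{s_1})^j e^{-ws_1}$, together with Rodrigues' formula $(-\partial_s)^j e^{-s^2/2}=\He_j(s)e^{-s^2/2}$, then gives $I_j(s_1)=\He_j(s_1)e^{-s_1^2/2}/\sqrt{2\pi}$. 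Summing $j=0,\ldots,M-1$ produces precisely $K^{\mathrm{std}}_M(s_1,s_2)\cdot e^{(s_2^2-s_1^2)/4}$, completing the identification.

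There is no essential obstacle; the main care is in justifying the exchange of the $z$- and $w$-integrals and the termwise splitting of the integrand. Both are routine: on $w=\varepsilon+\I y$ one has $|e^{w^2/2-ws_1}|=e^{(\varepsilon^2-y^2)/2-\varepsilon s_1}$, whose Gaussian decay dominates $|w|^j$, while $|w-z|\geq\varepsilon/2$ provides a uniform lower bound on the denominator. The one conceptual point worth emphasizing is the Fredholm-determinant invariance under the diagonal conjugation, without which the asymmetric contour kernel would not visibly match the symmetric Christoffel--Darboux form.
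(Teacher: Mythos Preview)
Your argument is correct: the geometric-series identity for $w^M/(z^M(w-z))$, the vanishing of the remainder term on the chosen contours, and the identification of the separated integrals with Hermite polynomials via the generating function and Rodrigues' formula all go through as you describe, yielding the standard Christoffel--Darboux kernel up to the harmless conjugation by $e^{-s^2/4}$.

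The paper itself does not prove this proposition. It is stated as a known fact, with the determinantal structure of the GUE eigenvalue process attributed to Sect.~5.2 of~\cite{Meh91}; the specific double-contour form of the kernel is simply recorded for later use (it reappears as the scaling limit in Proposition~\ref{prop4.2}). So there is no ``paper's own proof'' to compare against---you have supplied a self-contained verification where the author chose to cite. Your approach is the standard one for passing between the orthogonal-polynomial and contour-integral representations of finite-$M$ kernels, and nothing in it is at odds with the paper.
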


The generalization of Proposition~1 of~\cite{BFS09} is the following.
\begin{thm}\label{ThmMain}
Consider TASEP with initial conditions (\ref{eqICx}) and jump rates (\ref{eqICv}) with $\alpha<1/2$. Define the constants $\sigma=\sqrt{\frac{\alpha(1-2\alpha)}{2(1-\alpha)}}$, $\xi_c=\eta\frac{1-2\alpha}{1-\alpha}/\sigma=\eta \sqrt{\frac{2(1-2\alpha)}{\alpha(1-\alpha)}}$ and consider the scaling
\begin{equation}\label{eq2.6}
n=\frac{1-\alpha}{2}t+\eta t^{1/2},\quad x(\xi)=(\alpha-\tfrac12)t-2\eta t^{1/2}-\sigma \xi t^{1/2}.
\end{equation}
Then, for $\xi>0$,
\begin{equation}
\lim_{t\to\infty} \Pb(x_n(t)\leq x(\xi)) = F_{{\rm GUE}(M)}(\xi+\xi_c),
\end{equation}
and for $\xi<0$,
\begin{equation}
\lim_{t\to\infty} \Pb(x_n(t)\geq x(\xi)) = 0.
\end{equation}
In particular, this means that the distribution function has a Dirac mass at $\xi=0$ with mass
$F_{\rm GUE(M)}(\xi_c)$.
\end{thm}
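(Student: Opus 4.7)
I would follow the probabilistic reduction advertised in Section~\ref{sectIntro}. The shock sits asymptotically at $((\alpha-\tfrac12)t,t)$, where the left characteristic (speed $0$, emerging from the density-$\tfrac12$ bulk) meets the right characteristic (speed $2\alpha-1$, emerging from the density-$(1-\alpha)$ jam). Fix $\nu\in(2/3,1)$; backtracking by $t^\nu$, the two base points are separated by $(1-2\alpha)t^\nu\gg t^{2/3}$. Combining the basic-coupling splitting of Section~\ref{SectSplitting} with the slow-decorrelation result of Section~\ref{SectSlowDec}, the event $\{x_n(t)\leq x(\xi)\}$ reduces, up to asymptotically vanishing errors, to a conjunction of two events on two decoupled auxiliary processes: (i) the pure density-$\tfrac12$ TASEP obtained by replacing the $M$ slow particles by fast ones at the same positions, whose $n$-th particle I denote $\tilde x_n(t)$; and (ii) the isolated rate-$\alpha$ $M$-particle subsystem, whose dynamics is not affected by the fast particles behind it and whose leftmost particle $x_0(t)$ is, after centering and rescaling, asymptotically distributed as the largest GUE$(M)$ eigenvalue (see \cite{FF12b,GS12}).

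On the $t^{1/2}$ scale of the theorem the two contributions are of very different magnitudes. The bulk auxiliary gives $\tilde x_n(t)=(\alpha-\tfrac12)t-2\eta t^{1/2}+\Or(t^{1/3})$, whose $t^{1/3}$ fluctuations collapse to a point mass on the $t^{1/2}$ scale; this is what yields the sharp cut-off at $\xi=0$ and the Dirac atom. The jam auxiliary, writing the position of particle $n$ as $\hat x_n(t)=x_0(t)-n/(1-\alpha)+o(t^{1/2})$, after substituting $n=(1-\alpha)t/2+\eta t^{1/2}$ becomes $(\alpha-\tfrac12)t+t^{1/2}(\sigma\Lambda-\eta/(1-\alpha))$ for a GUE$(M)$-related fluctuation $\Lambda$. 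The shift $2\eta-\eta/(1-\alpha)=\eta(1-2\alpha)/(1-\alpha)$ is precisely $\sigma\xi_c$ for $\sigma$ and $\xi_c$ as defined in the theorem, which is exactly why the limiting cumulative distribution ends up being $F_{\GUE(M)}(\xi+\xi_c)$ rather than simply $F_{\GUE(M)}(\xi)$. Putting the two auxiliaries together under the representation $x_n(t)=\min(\tilde x_n(t),\hat x_n(t))+o(t^{1/2})$ yields the three regimes stated in the theorem.

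The main difficulty is making the splitting (i)+(ii) rigorous without the LPP mapping: one must bound $x_n(t)$ between the two auxiliary quantities using basic coupling and attractiveness of TASEP, and show that both the upper and lower bounds converge to the same limit after the slow-decorrelation transfer from time $t-t^\nu$ to time $t$ with error $o(t^{1/3})\subset o(t^{1/2})$. This is precisely the content of Section~\ref{SectSplitting} and is the novel technical ingredient of the paper (and the one that will carry over to PASEP). Once the splitting is in place, the GUE$(M)$ limit for the leftmost particle of the slow-particle cluster comes directly from \cite{FF12b,GS12}, and matching the explicit constants $\sigma$ and $\xi_c$ to the hydrodynamic densities $\tfrac12$ and $1-\alpha$ is the elementary algebraic check displayed above.
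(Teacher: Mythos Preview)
Your overall strategy---split via basic coupling into two auxiliaries and exploit that they live on different fluctuation scales ($t^{1/3}$ vs.\ $t^{1/2}$)---is the paper's. Two points of divergence are worth flagging, one minor and one more substantive.

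First, the paper does \emph{not} use slow decorrelation for this theorem. As noted at the opening of Section~\ref{SectSlowDec}, the proof of Theorem~\ref{ThmMain} relies only on Lemma~\ref{lemMinimum} together with the scale separation; slow decorrelation is developed for use with other initial conditions. Lemma~\ref{lemMinimum} gives the \emph{exact} identity $x_n(t)=\min\{x_n^A(t),x_n^B(t)\}$, not an approximate one up to $o(t^{1/2})$, so there is no need to backtrack to time $t-t^\nu$ and then transfer.

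Second, your auxiliary (ii) is not the paper's $x^B$. The paper takes $x^B$ to be TASEP with \emph{step} initial condition $x_n^B(0)=-n$ for all $n\geq -M+1$ (the $M$ slow particles in front, the rest densely packed behind); the limit $\Pb(x_n^B(t)\leq x(\xi))\to F_{\GUE(M)}(\xi+\xi_c)$ is then read off directly from the explicit kernel and a BBP-type steepest-descent analysis (Proposition~\ref{prop4.2}, essentially~\cite{BBP06}). You instead isolate only the $M$ slow particles, take the GUE$(M)$ limit for $x_0(t)$ from~\cite{FF12b,GS12}, and then write $\hat x_n(t)=x_0(t)-n/(1-\alpha)+o(t^{1/2})$, i.e.\ you replace the kernel computation by a hydrodynamic statement about the jam profile behind $x_0$. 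That relation is plausible but is not delivered by Lemma~\ref{lemMinimum} and would need a separate argument controlling the spacings inside the jammed region to $o(t^{1/2})$. The paper's choice of $x^B$ sidesteps this: the constants $\sigma$ and $\xi_c$ drop out of the saddle-point analysis of the $x^B$ kernel rather than from a jam-density heuristic, and the min-identity plus Propositions~\ref{lemGOEextended} and~\ref{prop4.2} finish the proof in a few lines.
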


The situation of Theorem~\ref{ThmMain} is illustrated in Figure~\ref{FigureShock}.
\begin{figure}[t]
\begin{center}
\psfrag{x}[c]{$x$}
\psfrag{n}[l]{$n$}
\psfrag{t12}[cb]{$t^{1/2}$}
\includegraphics[height=5cm]{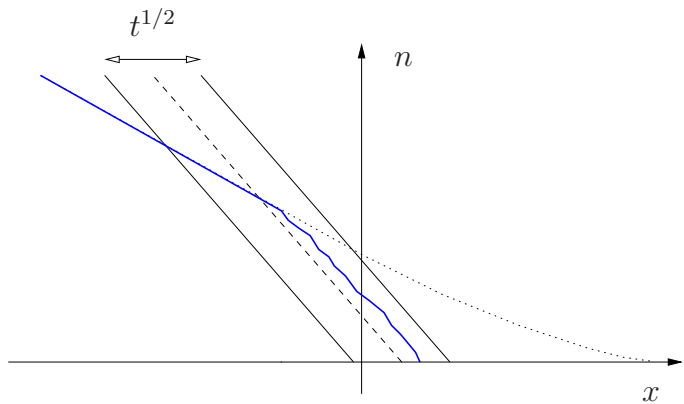}
\caption{Illustration of the shock. The continuous (blue) line is the position of particles, the dashed line is the macroscopic position of particles inside the jam, while the dotted line would be the position without the $M$ $\alpha$-particles. The shock is located in a $t^{1/2}$-neighborhood of the intersection of the dashed and dotted lines.}
\label{FigureShock}
\end{center}
\end{figure}
In the $t^{1/2}$ scale, the fluctuations before the shock are irrelevant since they are only of order $t^{1/3}$. The value $\xi=0$ corresponds to the position which would have particle $n$ if the slow particles were not present. In particular, for $\xi>0$, if $x_n(t)<x(\xi)$ is satisfied, then particle $n$ has already reached the shock. The Dirac mass at $\xi=0$, $F_{\rm GUE(M)}(\xi_c)$, gives the probability that particle $n$ has not yet reached the shock at time $t$. Increasing the value of $\eta$ means looking at a particle which is more to the left and thus it has a larger probability of not having reached the shock at time $t$.

\section{Main ingredients without LPP}

\subsection{Splitting of the problem into two easier ones}\label{SectSplitting}
Instead of studying directly the initial condition (\ref{eqICx}), we consider the following ones:\\
(a) the system without the slow particles,
\begin{equation}\label{eqICA}
x^A_n(0)=-2n, \quad v_n=1,\quad n\geq 1,
\end{equation}
(b) the system with slow particles but with the normal particles initially densely packed,
\begin{equation}\label{eqICB}
x^B_n(0)=-n, \quad n\geq -M+1
\end{equation}
and jump rates (\ref{eqICv}).

Using the graphical construction for TASEP~\cite{Lig76,Har78}, see also~\cite{Li99}, one can couple the processes $\{x_n(t),n\geq -M+1\}$, $\{x^A_n(t),n\geq 0\}$ and $\{x^B_n(t),n\geq -M+1\}$. With this basic coupling, since TASEP is attractive, one immediately has, for any $n\geq 1$,
\begin{equation}
x_n(t)\leq x_n^A(t),\quad x_n(t)\leq x_n^B(t).
\end{equation}
As we shall see, with  a little more of thinking one obtains the following equality.
\begin{lem}\label{lemMinimum}
For all $t\geq 0$ it holds
\begin{equation}\label{eqMinimum}
x_n(t)=\min\{x_n^A(t),x_n^B(t)\},
\end{equation}
for all $n\geq 1$.
\end{lem}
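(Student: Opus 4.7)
The plan is to exhibit $y_n(t) := \min\{x_n^A(t), x_n^B(t)\}$ (for $n\geq 1$), extended by $y_n(t) := x_n^B(t)$ for $-M+1\leq n\leq 0$, as a TASEP obeying exactly the same dynamics as the original process on the same probability space, and then to invoke uniqueness of the graphical construction. At $t=0$ one has $y_n(0) = \min\{-2n,-n\} = -2n = x_n(0)$ for $n\geq 1$ and $y_n(0) = -n = x_n(0)$ for $-M+1\leq n\leq 0$. Moreover, no normal particle ever sits ahead of a slow particle under the basic coupling, so the slow particles in the original process evolve identically to those in system B; thus $x_n \equiv y_n$ for $-M+1\leq n\leq 0$ automatically, and the task reduces to verifying the one-step rule for $y_n$ at $n\geq 1$.

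The key step is to check that at each ring of the Poisson clock of a particle $n\geq 1$ (the only times $x_n^A$, $x_n^B$, and hence $y_n$, can change), $y_n$ increments by $1$ precisely when $y_{n-1}(\tau^-)>y_n(\tau^-)+1$, which is the TASEP rule the original process would apply. Writing $a = x_n^A(\tau^-)$, $b = x_n^B(\tau^-)$ and splitting on the sign of $a-b$, one uses the in-system gap inequalities $x_{n-1}^A \geq a+1$, $x_{n-1}^B\geq b+1$ together with $y_{n-1} = \min\{x_{n-1}^A,x_{n-1}^B\}$ to translate the jump conditions on $x_n^A$ and $x_n^B$ into the desired condition on $y_{n-1}$. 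For instance, when $a<b$ one has $y_n=a$; then $y_n$ increments iff $x_n^A$ does, iff $x_{n-1}^A > a+1$, and the bound $x_{n-1}^B\geq b+1>a+1$ makes this equivalent to $y_{n-1}>a+1$. The symmetric case $a>b$ and the tie case $a=b$ are handled in the same spirit.

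A small piece of extra care is needed at $n=1$, where particle $1$ has no blocker in system A (it jumps whenever its clock rings) but in the original system it is obstructed by the slow particle $0$. Setting $c = x_0^B(\tau^-) = y_0(\tau^-)$ and using the system-B gap $c\geq b+1$, a parallel short case analysis shows that the minimum again correctly implements the constraint imposed by the effective blocker $y_0$, so the TASEP step rule holds throughout. The main — and only — obstacle is the bookkeeping of these cases; what makes it go through cleanly is that the in-system gap inequalities are tight enough that the $\min$ faithfully reproduces the slow-particle constraint when it is active and is vacuous when it is not. Uniqueness of the coupled dynamics then identifies $x_n(t) = y_n(t)$ for all $n\geq 1$, which is the claim.
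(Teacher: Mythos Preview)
Your argument is correct and takes a genuinely different route from the paper. The paper introduces an index process $I(t)$ tracking the left-most normal particle that has already been blocked (directly or indirectly) by a slow particle, and shows inductively on the jump times of $I$ that $x_n(t)=x_n^A(t)$ whenever $I(t)<n$ and $x_n(t)=x_n^B(t)$ whenever $I(t)\geq n$; combined with the a~priori inequalities $x_n\leq x_n^A$ and $x_n\leq x_n^B$ from attractiveness, this forces $x_n=\min\{x_n^A,x_n^B\}$. You instead verify, by a clock-by-clock case analysis, that the pointwise minimum of the two coupled systems (extended to the slow labels via system~B) itself obeys the TASEP update rule with the original initial data, and conclude by uniqueness of the graphical construction. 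The paper's argument carries the extra information of \emph{which} of $x_n^A,x_n^B$ realizes the minimum at each time --- namely $x_n^B$ precisely once the influence front $I(t)$ has reached $n$ --- which matches the shock picture and is useful conceptually. Your argument is more mechanical but also more portable: it essentially proves the general fact that the particle-wise minimum of two basically-coupled TASEPs is again a TASEP driven by the same clocks, a statement independent of the specific initial conditions (\ref{eqICA})--(\ref{eqICB}) and reusable for other decompositions.
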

\begin{proof}
In the process with initial conditions (\ref{eqICx}) we introduce a (right-continuous) process $I(t)$ keeping track of the left-most particle which has been affected by the presence of the slow particles. Start with $I(0)=0$. If at time $t$ a jump of particle $m+1$ is blocked by the presence of particle $m$ and $I(t^-)=m$, then we set then $I(t)=m+1$.

At time $t$ there are two possibilities:\\
(a) $I(t)<n$. In this case the presence of the slow particle had not reached particle $n$ already and by the graphical construction we have $x_n(t)=x_n^A(t)$.\\
(b) $I(t)\geq n$. In this case let us show that $x_n(t)=x_n^B(t)$. First of all, $x_1(t)=x_1^B(t)$. At the time $t_1$ when $I$ jumps from $1$ to $2$, we have already $x_1(t_1)=x_1^B(t_1)$. Further, $I$ increases because a jump is suppressed, meaning that $x_2(t_1)=x_1(t_1)-1$. But since by coupling we have always $x_2(t_1)\leq x_2^B(t_1)$, we get
\begin{equation}
x_1(t_1)-1=x_2(t_1)\leq x_2^B(t_1)< x_1^B(t_1)=x_1(t_1).
\end{equation}
This implies that $x_2^B(t_1)=x_1(t_1)-1=x_2(t_1)$ as well. By repeating the argument at each time where $I$ has a jump leads to $x_n(t)=x_n^B(t)$ for all $n\leq I(t)$. See Figure~\ref{FigInfluence} for an illustration.
\begin{figure}[t!]
\begin{center}
\psfrag{x}[cc]{$x$}
\psfrag{t}[cc]{$t$}
\includegraphics[height=13cm]{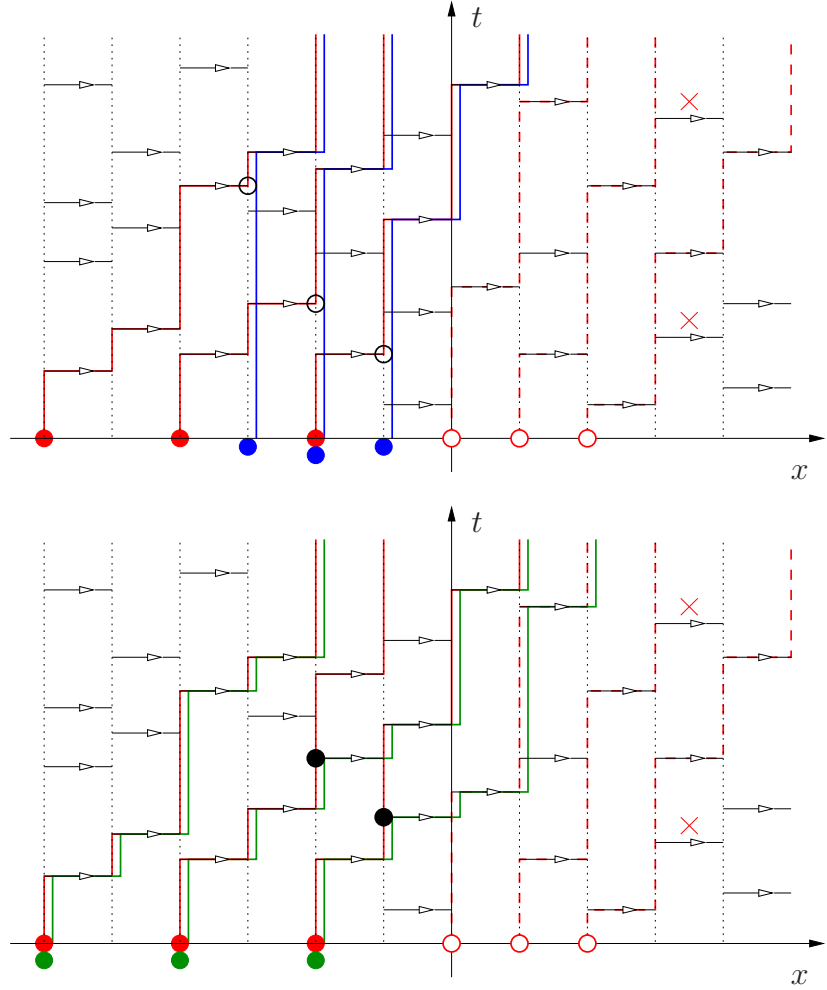}
\caption{The dotted lines are the trajectories of the slow particles. The red crosses represent the fact that with probability $1-\alpha$ the jump of a slow particle did not occur. The red lines are the trajectories of the particles in the original system $\{x_n(t),n\geq 1\}$. The blue lines are the trajectories of $\{x_n^B(t),n\geq 1\}$ and the green lines are the ones of $\{x_n^A(t),n\geq 1\}$. The white dots are the positions after which $x_n^B=x_n$ due to the coupling. The black dots are the points until when $x_n^A=x_n$. At that time, $x_n^A$ can jump but $x_n$ not (they might merge again later). At times where there is a black dot, $t\mapsto I(t)$ increases by one. Between a white and a black dot, the particles involved for the three initial conditions coincide.}
\label{FigInfluence}
\end{center}
\end{figure}
\end{proof}

\subsection{Slow decorrelation}\label{SectSlowDec}
For the proof of Theorem~\ref{ThmMain} we will use the fact that the fluctuations of $x_n^A$ and $x_n^B$ are on two different scales. However, if one would like to prove results for other initial conditions like the ones studied in~\cite{FN13,FN16} without employing the connection to LPP, then one would use slow decorrelation as well. In~\cite{CFP10a} the slow-decorrelation was stated and proven for several models in the KPZ class, for last passage percolation, for positive temperature polymers but also for PASEP. For PASEP show-decorrelation was stated in terms of the height function. Here we explain it in terms of particle positions for the initial condition (\ref{eqICA}).

TASEP with initial condition (\ref{eqICA}) has been studied in~\cite{BFS07}. In particular, compare with (2.21) of~\cite{BFS07}, one considers the rescaled random variable
\begin{equation}
X^{A}_t:=\frac{x^A_{\frac{1-\alpha}{2}t}(t)-(\alpha-\tfrac12)t}{-t^{1/3}}.
\end{equation}
The large $t$ limit of $X^A_t$ is known~\cite{BFS07} (see~\cite{BR99,BR99b,Jo03b} for discrete time analogues in LPP framework).
\begin{lem}\label{lemGOE}
For any $\alpha<1/2$, it holds
\begin{equation}
\lim_{t\to\infty} X^{A}_t\stackrel{(d)}{=}  \tfrac12 \xi_{\rm GOE},
\end{equation}
where $\xi_{\rm GOE}$ is a GOE Tracy-Widom distributed random variable.
\end{lem}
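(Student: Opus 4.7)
The plan is to apply directly the large-time asymptotic analysis of~\cite{BFS07}: the lemma is essentially the one-point specialization of the main convergence theorem of that paper, and the variable $X_t^A$ has been set up to match equation (2.21) of~\cite{BFS07}. I would proceed through three short verifications.

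First, I would locate the observation point in the hydrodynamic picture. The initial data $x_n^A(0)=-2n$, $n\geq 1$ correspond to density $1/2$ on $(-\infty,0)$ and density $0$ on $[0,+\infty)$. The Burgers-equation profile at time $t$ is density $1/2$ on $\{x<0\}$, the rarefaction fan $\rho(y/t)=(1-y/t)/2$ on $\{0<y<t\}$, and $0$ beyond. Inverting $n=\int_{x}^{\infty}\rho(y,t)\,\D y$ at $n=\frac{1-\alpha}{2}t$ gives the macroscopic position $x_{\rm mac}=(\alpha-\tfrac12)t<0$, which lies well inside the density-$1/2$ plateau because $\alpha<1/2$. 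This is precisely the regime in which~\cite{BFS07} identifies GOE Tracy-Widom fluctuations on the $t^{1/3}$ scale (the constant drift $1/2$ is the mean velocity of a tagged particle in a Bernoulli-$1/2$ environment, so the centering $(\alpha-\tfrac12)t=-2n+t/2$ is the expected one).

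Second, I would apply the Fredholm determinant formula~(\ref{eq1}) with the explicit correlation kernel $K_{N,t}$ obtained by biorthogonalization in~\cite{BFS07} for this half-flat initial data. The asymptotic analysis there (contour deformation followed by steepest descent around a double critical point reflecting the flat density-$1/2$ profile) shows that the rescaled kernel converges uniformly on compacts to a GOE-type limit kernel, and that the corresponding Fredholm determinants converge. Taking the one-point specialization one obtains, for every $s\in\R$,
\begin{equation*}
\lim_{t\to\infty}\Pb\bigl(x^A_{(1-\alpha)t/2}(t)\leq (\alpha-\tfrac12)t - s\, t^{1/3}\bigr)=\Pb\bigl(\tfrac12\xi_{\rm GOE}\leq s\bigr),
\end{equation*}
which, rewritten in terms of $X_t^A$, is exactly the statement of the lemma.

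The main obstacle is purely bookkeeping of constants rather than anything conceptual: one has to match the scaling of~(\ref{eqICA}) and~(\ref{eq2.6}) with the scaling variables used in~\cite{BFS07}, and in particular pin down the prefactor $\tfrac12$ in front of $\xi_{\rm GOE}$, which traces back to the cubic coefficient of the exponent in the saddle-point expansion at density $\rho=1/2$. Everything substantive — the determinantal formula, the contour deformations, the uniform saddle-point bounds and the tail estimates justifying convergence of Fredholm determinants — is already carried out in~\cite{BFS07}, so the proof reduces to reading off the one-point specialization at the correct space-time point.
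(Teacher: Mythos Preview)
Your proposal is correct and matches the paper's treatment: the paper does not give a separate proof of this lemma but simply records it as the one-point specialization of the convergence result (2.21) in~\cite{BFS07}, exactly as you suggest. Your write-up is in fact more detailed than what the paper provides, since you spell out the hydrodynamic localization and the origin of the factor $\tfrac12$, whereas the paper just cites~\cite{BFS07} (and~\cite{BR99,BR99b,Jo03b} for discrete-time analogues).
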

Physically one expects that particles which are around position $(\alpha-\tfrac12)t$ at time $t$ are non-trivially correlated with particles at previous time if they are on (or close to) a given characteristic line. For TASEP with density $\rho$, the characteristic lines have speed $1-2\rho$. In our case $\rho=1/2$, thus we have to look at particles also around position $(\alpha-\tfrac12)t$. Take any $\nu\in (0,1)$. Then a simple computation using (2.21) of~\cite{BFS07} gives
\begin{equation}
\widetilde X^{A}_t:=\frac{x^A_{\frac{1-\alpha}{2}t-t^\nu/4}(t-t^\nu)-(\alpha-\tfrac12)t}{-t^{1/3}}\to  \tfrac12 \widetilde\xi_{\rm GOE}
\end{equation}
in distribution as $t\to\infty$, with $\widetilde \xi_{\rm GOE}$ also a GOE Tracy-Widom distributed random variable~\cite{TW96}. Slow-decorrelation is the following statement.
\begin{prop}
For any $\e>0$,
\begin{equation}
\lim_{t\to\infty}\Pb(|\widetilde X^{A}_t-X^{A}_t|\geq \e)=0.
\end{equation}
\end{prop}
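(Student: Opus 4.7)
My plan is to adapt the slow-decorrelation argument of~\cite{CFP10b} from the height-function setting to the particle-position formulation used here. The engine is the comparison lemma of Ferrari~\cite{Fer07}: if $X_t \geq Y_t$ almost surely and both $X_t, Y_t$ converge in distribution to a common limit, then $X_t - Y_t \to 0$ in probability. I will set up $X^A_t$ and $\widetilde X^A_t$ (or closely related quantities) so that this lemma applies.

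First I would translate to the height function $h^A(y, s) := \#\{k \geq 1 : x^A_k(s) \geq y\}$. With $y_s := (\alpha - \tfrac12)t - s t^{1/3}$, one has $\{X^A_t \leq s\} = \{h^A(y_s, t) \geq n\}$ and $\{\widetilde X^A_t \leq s\} = \{h^A(y_s, t - t^\nu) \geq n'\}$, where $n' = n - t^\nu/4$. The label shift $n - n' = t^\nu/4$ matches the macroscopic flux $\rho(1-\rho)t^\nu$ at density $\rho = 1/2$, and the characteristic through $(y_s, t)$ has speed $1 - 2\rho = 0$, so it also passes through $(y_s, t - t^\nu)$. This is the reason that the fluctuations of $h^A(y_s, t) - n$ and $h^A(y_s, t - t^\nu) - n'$ ought to share the same $t^{1/3}$-scale distributional limit.

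Next, using~\cite{BFS07}, I would check that both $X^A_t$ and $\widetilde X^A_t$ converge in distribution to $\tfrac12\xi_{\rm GOE}$: the first is Lemma~\ref{lemGOE}, the second follows from the same computation with $(t, n)$ replaced by $(t - t^\nu, n - t^\nu/4)$, a perturbation of lower order than the relevant scales. Since particles in TASEP can only move to the right, the height function is monotone in time, giving the deterministic inequality $h^A(y_s, t) \geq h^A(y_s, t - t^\nu)$. Combined with the matching one-point limits, the comparison lemma would then yield $X^A_t - \widetilde X^A_t \to 0$ in probability.

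The main obstacle is reconciling the label shift with the height inequality: naively translating $h^A(y_s, t) - n \geq h^A(y_s, t - t^\nu) - n$ into a bound on the rescaled particle variables introduces a macroscopic correction of order $t^{\nu - 1/3}$, which diverges for $\nu > 1/3$. The resolution, as in~\cite{CFP10b}, is to work with the centered net flux $h^A(y_s, t) - h^A(y_s, t - t^\nu) - t^\nu/4$ and to show that it is $o(t^{1/3})$ in probability. This is the substantive content of slow decorrelation; the elegance of the CFP10b approach is that it is obtained not from direct moment estimates on the flux (delicate at the critical density $\rho = 1/2$) but from a soft coupling argument that reduces to matching distributional limits through the comparison lemma.
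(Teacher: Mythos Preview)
Your outline correctly identifies the structure: both $X^A_t$ and $\widetilde X^A_t$ converge in law to $\tfrac12\xi_{\rm GOE}$, and a one-sided almost-sure comparison would let the lemma of~\cite{Fer07}/\cite{BC09} finish the proof. You also correctly flag the obstacle: the monotonicity $h^A(y,t)\geq h^A(y,t-t^\nu)$, after shifting by the different labels $n$ and $n'=n-t^\nu/4$, produces a divergent $t^{\nu-1/3}$ correction and so does not yield an inequality between the rescaled variables.

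The gap is that you never say how to repair this. Writing ``work with the centered net flux and show it is $o(t^{1/3})$'' merely restates the goal, and ``a soft coupling argument'' is not a proof until the coupling is specified. The paper supplies exactly the missing ingredient: at time $t-t^\nu$, remove all particles to the right of $x^A_{n'}(t-t^\nu)$ and pack the remaining $t^\nu/4$ particles densely behind it; basic coupling then gives
\[
x^A_{n}(t)\;\le\;x^A_{n'}(t-t^\nu)+x^{\rm step}_{t^\nu/4}(t^\nu),
\]
with the two right-hand terms independent. After rescaling, the step term contributes $t^{(\nu-1)/3}X^{\rm step}_{t^\nu}\to 0$ in probability (since $\nu<1$ and $X^{\rm step}_{t^\nu}$ has a nondegenerate limit), so one obtains a one-sided inequality between $X^A_t$ and $\widetilde X^A_t$ up to an $o_p(1)$ error, and now the comparison lemma applies. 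This step-initial-condition comparison --- not height-function monotonicity in time --- is the device that absorbs the label shift; your sketch is missing precisely this piece.
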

\begin{proof}
The proof is essentially as in~\cite{CFP10b}, except that here we consider particle positions as observables instead of the height function. For the proof, we need one more ingredient. Let $x^{\rm step}_n(t)$ be the position of particle $n$ at time $t$ with step initial condition, i.e., $x^{\rm step}_n(0)=-n+1$, $n\geq 1$. It is well-known~\cite{Jo00b} (see~\cite{BF07} for an approach without using last passage percolation) that
\begin{equation}
X^{\rm step}_{t^\nu}:=\frac{x^{\rm step}_{t^\nu/4}(t^\nu)}{-t^{\nu/3}}\to 2^{-1/3}\xi_{\rm GUE},
\end{equation}
in distribution as $t\to\infty$, where $\xi_{\rm GUE}$ is a GUE Tracy-Widom distribution function~\cite{TW94}.

Now consider the configuration at time $t-t^\nu$, remove all particles to the right of $x^A_{\frac{1-\alpha}{2}t-t^\nu/4}(t-t^\nu)$ and put to its left the remaining $t^\nu/4$ particles densely packed. Then, by basic coupling, we have
\begin{equation}
x^A_{\frac{1-\alpha}{2}t}(t) \leq x^A_{\frac{1-\alpha}{2}t-t^\nu/4}(t-t^\nu)+ x^{\rm step}_{t^\nu/4}(t^\nu)
\end{equation}
with the latter two random variables being independent. After rescaling this becomes
\begin{equation}
X^A_t\leq \widetilde X^A_t + t^{(\nu-1)/3}X^{\rm step}_{t^\nu}.
\end{equation}
In particular, there exists a compensator $Z_t\geq 0$ such that
\begin{equation}
X^A_t= \widetilde X^A_t + t^{(\nu-1)/3}X^{\rm step}_{t^\nu}- Z_t.
\end{equation}
We have $t^{(\nu-1)/3}X^{\rm step}_{t^\nu}\to 0$ in probability, $X^A_t$ and $\widetilde X^A_t$ converging in distribution to the same limit. By Lemma~4.1 of~\cite{BC09} this implies that $Z_t$ converges in probability to zero. Together with the fact that $t^{(\nu-1)/3}X^{\rm step}_{t^\nu}\to 0$ in probability, it implies that
$X^A_t-\widetilde X^A_t$ converges in probability to zero as well.
\end{proof}

\subsection{Localization of the correlations}\label{SectAsymptIndep}
To understand the asymptotic independence, one has to understand which space-time regions are actually relevant for the position of the particles. For that purpose, we need to see how $x_N(t)$ is influenced by the position of previous particles by suppressing jumps.

Let us define the following process running \emph{backwards} in time. Let $N(t)=N$. If at time $\tilde t^+$ we have $N(\tilde t^+)=n$ and at time $\tilde t$ a jump of particle $n$ is suppressed by the presence of particle $n-1$, then we set $N(\tilde t)=n-1$. Further, for any $u \in (0,t)$, we define the particle system $\{\tilde x_n(s),u\leq s\leq t, n\geq N(u)\}$ by setting the position of particles at $s=u$ as $\tilde x_n(u)=x_{N(u)}(u)-n+N(u)$, for $n\geq N(u)$, see Figure~\ref{FigSpaceCut}. The evolution of $\tilde x_n$'s and $x_n$ are coupled by the basic coupling.
\begin{figure}[t!]
\begin{center}
\psfrag{x}[lc]{$x$}
\psfrag{t}[lb]{$t$}
\psfrag{tau}[lb]{$u$}
\includegraphics[height=7cm]{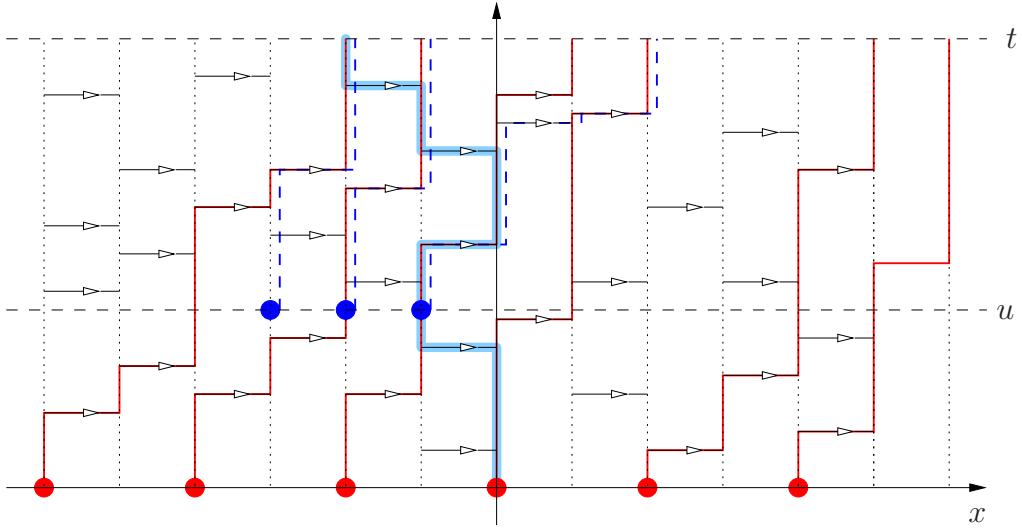}
\caption{The red solid lines are the trajectories of $\{x_n(s),0\leq s\leq t, \textrm{ all }n\}$. The thick light-blue line is the trajectory of $\{x_{N(s)}(s),0\leq s\leq t\}$. The blue dots and the dashed blue lines are the trajectories of $\{\tilde x_{n}(s), u\leq s\leq t,n\geq N(u)\}$.}
\label{FigSpaceCut}
\end{center}
\end{figure}

\begin{prop}\label{prop3.4}
For any $0\leq u\leq t$, we have the identity
\begin{equation}
x_N(t)=\tilde x_N(t) = x_{N(u)}(u)+\left[\tilde x_{N}(t)-\tilde x_{N(u)}(u)\right].
\end{equation}
In particular, one has
\begin{equation}\label{eq3.15}
\tilde x_{N}(t)-\tilde x_{N(u)}(u) \stackrel{(d)}{=} x^{\rm step}_{N-N(u)+1}(t-u)
\end{equation}
where $x^{\rm step}_n(t)$ is the position of particle $n$ at time $t$ with initial condition $x_n^{\rm step}(0)=-n+1$, $n\geq 1$.
\end{prop}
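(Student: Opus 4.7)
The plan is to couple $x$ and $\tilde x$ by the basic coupling so they share Poisson clocks on $[u,t]$, and to decompose $[u,t]$ using the transition times of the backward chain defining $N(\cdot)$. Set $t=:t_0 > t_1 > \cdots > t_{N-N(u)} \geq u$, with $t_j$ the time at which $N(s)$ jumps from $N-j+1$ down to $N-j$, i.e., at $t_j$ a jump of $N-j+1$ was suppressed by $N-j$. By the very definition of $N(u)$, particle $N(u)$ is not blocked by $N(u)-1$ throughout $[u,t_{N-N(u)}]$, and for each $m \in \{N(u)+1,\ldots,N\}$ particle $m$ is not blocked by $m-1$ on $(t_{N-m+1}, t_{N-m}]$ (otherwise the chain would have an extra transition).

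I prove by induction on $m = N(u), N(u)+1, \ldots, N$ the statement
\begin{equation*}
(\ast)_m: \qquad \tilde x_m(s) = x_m(s) \quad \text{for all } s \in [\tau_m, t_{N-m}],
\end{equation*}
where $\tau_{N(u)}:=u$ and $\tau_m := t_{N-m+1}$ for $m > N(u)$. Base case: $\tilde x_{N(u)}(u) = x_{N(u)}(u)$ by construction, and on $[u,t_{N-N(u)}]$ neither process is blocked ($x_{N(u)}$ by the chain property; $\tilde x_{N(u)}$ because it is rightmost in the $\tilde x$-subsystem), so they coincide under the common clock. Inductive step $m \to m+1$: at $s = \tau_{m+1} = t_{N-m}$ the blocking gives $x_{m+1}(t_{N-m}) = x_m(t_{N-m})-1 = \tilde x_m(t_{N-m})-1$ by $(\ast)_m$; the exclusion rule forces $\tilde x_{m+1}(t_{N-m}) \leq \tilde x_m(t_{N-m})-1$, while attractiveness, initialized from the dominance $\tilde x_n(u) \geq x_n(u)$ for $n \geq N(u)$ (which holds because the $\tilde x$-configuration is densely packed at $x_{N(u)}(u)$), gives the matching lower bound. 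For $s \in (t_{N-m}, t_{N-m-1}]$, $x_{m+1}$ is unblocked and therefore makes exactly one jump per clock ring of label $m+1$, whereas $\tilde x_{m+1}$ makes at most one per ring; starting from equality at $t_{N-m}$ this yields $\tilde x_{m+1}(s) \leq x_{m+1}(s)$, and attractiveness supplies the reverse inequality, establishing $(\ast)_{m+1}$.

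Taking $m=N$ and $s=t$ gives $\tilde x_N(t) = x_N(t)$, and the identity $x_N(t) = x_{N(u)}(u) + [\tilde x_N(t)-\tilde x_{N(u)}(u)]$ follows from $\tilde x_{N(u)}(u) = x_{N(u)}(u)$. For (\ref{eq3.15}), the $\tilde x$-subsystem on $[u,t]$ is by construction a TASEP started at time $u$ from the densely packed configuration $\tilde x_{N(u)+k}(u) = x_{N(u)}(u)-k$, $k \geq 0$; translating space by $-x_{N(u)}(u)$ and relabeling $n \mapsto n-N(u)+1$ transforms it into a step-initial-condition TASEP run for time $t-u$, with the particle originally labelled $N$ corresponding to $x^{\rm step}_{N-N(u)+1}(t-u)$.

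The delicate point is the inductive step on $(t_{N-m}, t_{N-m-1}]$: one has to use the chain property to know that $x_{m+1}$ is genuinely unblocked there (so that its displacement is exactly the number of clock rings) and combine this with attractiveness — a naive attempt to propagate $\tilde x_m = x_m$ past $t_{N-m}$ may fail, since $\tilde x_m$ has no right-neighbor to block it while $x_m$ might.
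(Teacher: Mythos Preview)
Your proof is correct and follows essentially the same route as the paper: both argue by induction along the transition times of the backward chain $N(\cdot)$, using attractiveness ($\tilde x\geq x$) together with the blocking event at each $t_{N-m}$ to sandwich $\tilde x_{m+1}=x_{m+1}$, and then the chain property (no further block of particle $m+1$) to propagate the equality on the next interval. Your write-up is somewhat more explicit than the paper's contradiction-style phrasing of the interval step, and you spell out the translation/relabeling behind (\ref{eq3.15}), but the substance is identical.
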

\begin{proof}
Initially we have $\tilde x_{N(u)}(u)=x_{N(u)}(u)$. Let $\tilde t$ the first time where $N(\tilde t)=N(u)+1$. Then, we have $\tilde x_{N(s)}(s)=x_{N(s)}(s)$ for $s\in [u,\tilde t)$. Indeed, by construction $\tilde x_{N(s)}(s)\geq x_{N(s)}(s)$ for $s\in[u,\tilde t)$. Having $\tilde x_{N(s)}(s)> x_{N(s)}(s)$ would imply that for some time $v\in[u,s]$, there is a jump of $x_{N(v)}$ which is suppressed (and not suppressed for $\tilde x_{N(v)}$). But this would imply a change of $N(u)$ by $-1$, which is a contradiction.

Secondly, at time $\tilde t$, $x_{N(\tilde t)}(\tilde t)=x_{N(\tilde t)-1}(\tilde t)-1$ and a jump was suppressed, i.e., $N(\tilde t)=N(\tilde t^+)-1$. This implies
\begin{equation}
x_{N(\tilde t)-1}(\tilde t)-1 = x_{N(\tilde t)}(\tilde t) \leq \tilde x_{N(\tilde t)}(\tilde t) <\tilde x_{N(\tilde t)-1}(\tilde t) = x_{N(\tilde t)-1}(\tilde t),
\end{equation}
and thus we have $\tilde x_{N(\tilde t)}(\tilde t) = x_{N(\tilde t)}(\tilde t)$. Repeating the argument iteratively, we obtain $\tilde x_{N(t)}(t)= x_{N(t)}(t)=x_N(t)$. Finally, the identity (\ref{eq3.15}) follows by the definition of the process $\tilde x$.
\end{proof}

By construction of $N(t)$, the position $x_N(t)$ is equal $\tilde x_N(t)$. Further, notice that if to the left (resp.\ strictly to the right) of the (random) trajectory ${\cal X}=\{x_{N(s)}(s),0\leq s\leq t\}$ the rate of the Poisson processes are replaced by infinite rates, then the position of $x_N(t)$ remains unchanged. The next theorem tells us that ${\cal X}$ is included in a \emph{deterministic} region of size $\Or(t^{2/3+\e})$ with high probability.
\begin{thm}\label{thmcorrelations}
Fix $0<\e<1/3$, $\nu>1$, and set $N(t)=\nu t$. Then, for all $t$ large enough,
\begin{equation}
\Pb\left(\left|x_{N(\tau t)}(\tau t)-(-2\nu+1/2)t\right|\geq 3 t^{2/3+\e}\textrm{ for all }\tau\in[0,1]\right)\leq C e^{-c t^{2\e}}
\end{equation}
for some constants $C,c\in (0,\infty)$.
\end{thm}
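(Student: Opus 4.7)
The claim is that the random trajectory $\mathcal{X}=\{x_{N(s)}(s):s\in[0,t]\}$ lies in a horizontal strip of half-width $\sim t^{2/3+\e}$ around the line $y=(-2\nu+1/2)t$; the strip is horizontal because the characteristic speed at density $\rho=1/2$ is $1-2\rho=0$. My plan is to establish this by (i) discretizing time on a grid of mesh $t^{2/3}$, (ii) proving the one-point bound at each grid point via Proposition~\ref{prop3.4}, and (iii) interpolating by a Poisson-variation estimate.

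For step (i), I take $K=\lceil t^{1/3}\rceil$ grid times $\tau_k=k/K$. The trajectory $s\mapsto x_{N(s)}(s)$ changes only when a Poisson event fires at one of two specific sites (particle $N(s)$ or its left neighbor), each at rate at most one, producing a $\pm 1$ jump. So its total variation over a sub-interval of length $\Delta=t/K=t^{2/3}$ is dominated by a sum of two Poisson variables of mean $\le\Delta$, which exceeds $t^{2/3+\e/2}$ with probability $\le Ce^{-ct^{2\e}}$ by Chernoff; the union bound over $K$ intervals is harmless. For step (ii), I fix $u=\tau_k t$ and invoke Proposition~\ref{prop3.4}:
\[
x_{N(u)}(u)=x_N(t)-\bigl[\tilde x_N(t)-\tilde x_{N(u)}(u)\bigr],\qquad \bigl[\tilde x_N(t)-\tilde x_{N(u)}(u)\bigr]\stackrel{(d)}{=}x^{\rm step}_{N-N(u)+1}(t-u).
\]
Combining the flat-IC tail bound $|x_N(t)-(-2\nu+1/2)t|\le t^{1/3+\e}$ with failure probability $\le Ce^{-ct^{2\e}}$ (from Lemma~\ref{lemGOE} plus the large-deviation refinement in~\cite{BFS07}) and the step-IC tail bound $|x^{\rm step}_m(s)-(s-2\sqrt{ms})|\le s^{1/3+\e}$ with failure probability $\le Ce^{-cs^{2\e}}$ (from~\cite{BF07,Jo00b}) forces $N-N(u)+1$ to lie within $t^{2/3+\e/2}$ of $(1-\tau)t/4$. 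Feeding this window back into the flat-IC bound applied to the sandwiching labels $m_\pm=(\nu-(1-\tau)/4)t\pm t^{2/3+\e/2}$ yields the desired one-point estimate.

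The delicate point in step (ii) is that the index $N-N(u)+1$ on the right is random, so the step-IC tail bound cannot be applied off the shelf. I handle this by conditioning on $N(u)=m$ and union bounding over $m\in\{1,\dots,N\}$: monotonicity of $m\mapsto x^{\rm step}_m(s)$ means that only the two endpoints of the targeted window need separate control, and the polynomial loss from the union bound is absorbed by the $e^{-ct^{2\e}}$ factors. I expect this bootstrap to be the main obstacle, since one needs the quantitative flat-IC and step-IC one-point bounds to hold uniformly over the relevant range of labels; everything else is standard bookkeeping, and combining the three steps yields the stated bound with constants $C,c$ independent of $t$ for $t$ large.
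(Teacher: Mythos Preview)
Your strategy is the paper's: both proofs invoke Proposition~\ref{prop3.4} to decompose $x_N(t)=x_{N(u)}(u)+[\text{step increment}]$, feed in one-point tail bounds for half-flat and step initial data (the paper packages these as Corollary~\ref{corBounds} and Lemma~\ref{lemBound2}), discretize in $\tau$, and interpolate crudely between grid points. The paper first localizes the \emph{label} $N(\tau t)$ in a $t^{2/3+\e}$ window (Proposition~\ref{PropGoodSet}) and then sandwiches $x_{N(\tau t)}(\tau t)$ between two deterministic-label particles (Proposition~\ref{propLocalize}); your outline runs the same two steps.

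There is, however, a real gap in your step~(ii). From the two inputs you list, namely $|x_N(t)-(-2\nu+\tfrac12)t|\le t^{1/3+\e}$ and $|x^{\rm step}_{m'}(s)-(s-2\sqrt{m's})|\le s^{1/3+\e}$ with $m'=N-N(u)+1$ and $s=(1-\tau)t$, you cannot conclude that $m'$ lies near $(1-\tau)t/4$. The identity $x_N(t)=x_{N(u)}(u)+[\text{step}]$ carries two unknowns, $N(u)$ and $x_{N(u)}(u)$, and you have supplied only one further constraint. The missing third input is the two-sided bound $|x_m(\tau t)-(-2m+\tau t/2)|\le(\tau t)^{1/3+\e}$ uniformly in the label $m$ (this is precisely Corollary~\ref{corBounds}, obtained by sandwiching between full-flat and shifted-step initial data); you invoke it only afterwards for $m_\pm$, but it must already enter when pinning down $N(u)$. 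Once it is inserted and $x_{N(u)}(u)$ eliminated, one gets $\bigl(2\sqrt{m'}-\sqrt{(1-\tau)t}\,\bigr)^2=O(t^{1/3+\e})$ and hence $|m'-(1-\tau)t/4|=O(t^{2/3+\e/2})$ as you claim. A secondary point: the paper avoids your union bound over the random label altogether by noting that $\tau\mapsto N(\tau t)$ has unit jumps, so at the \emph{last} exit from the target window $N(\tau t)$ equals a deterministic boundary value $\widetilde N(\tau t)$; this reduces everything to fixed-label estimates and sidesteps any uniformity issues in the step-IC tail when $m'/s$ strays far from~$1/4$.
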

This means that the probability that the position $x_N(t)$ depends on the realization of the Poisson processes in
\begin{equation}
\{(j,s), |j-(-2\nu+1/2)t|\geq 3 t^{2/3+\e}, 0\leq s\leq t\}
\end{equation}
is bounded by $C e^{-c t^{2\e}}$.

\subsubsection*{Proof of the localization}
In order to control $N(s)$ and $x_{N(s)}(s)$, we need to have estimates on the position of particles $x_{\nu t}(t)$. One of the tail of its distribution is obtained by comparing with the flat initial condition, $\{x^{\rm flat}_n(0)=-2n, n\in\Z\}$, while the other tail by an estimate on the step initial condition $\{x^{\rm step}_n(0)=-n+1,n\geq 1\}$.
\begin{lem}\label{lemBounds1}
For any $\nu\geq 0$, we have
\begin{equation}\label{eq3.19a}
\Pb(x_{\nu t}(t)\leq -2\nu t+t/2-s t^{1/3})\leq\Pb(x^{\rm flat}_{t/4}(t)\leq -s t^{1/3}),
\end{equation}
and, for any $\nu\geq 1/4$, we have
\begin{equation}\label{eq3.19b}
\Pb(x_{\nu t}(t)\geq -2\nu t+t/2-s t^{1/3})\leq \Pb(x^{\rm step}_{t/4}(t)\geq -s t^{1/3}).
\end{equation}
\end{lem}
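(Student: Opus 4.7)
My plan is to derive both bounds from basic coupling (attractiveness of TASEP) combined with translation invariance of the flat initial condition in law, and with the pointwise domination $\eta^{\rm hf}\leq\eta^{\rm step}$.

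For (\ref{eq3.19a}) I would couple $\{x_n(\cdot)\}$ to the doubly-infinite flat TASEP $\{x^{\rm flat}_n(\cdot)\}_{n\in\Z}$ with $x^{\rm flat}_n(0)=-2n$ and all jump rates equal to $1$, labelling particles by initial position so that label $n$ denotes the particle starting at $-2n$. In the flat system the tagged particle with label $n\geq 1$ has strictly more particles ahead of it --- the infinite family at $\{0,2,4,\ldots\}$ in place of the bounded slow cluster $\{0,1,\ldots,M-1\}$ --- and those extra blockers jump at rate $1\geq\alpha$. Implementing this via the Harris graphical construction with independent thinning at each bond for the slow jumps yields $x^{\rm flat}_n(t)\leq x_n(t)$ for all $n\geq 1$ and $t\geq 0$, hence $\Pb(x_n(t)\leq c)\leq\Pb(x^{\rm flat}_n(t)\leq c)$. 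The translation invariance $x^{\rm flat}_{n+k}(t)\stackrel{(d)}{=}x^{\rm flat}_n(t)-2k$ with $k=\nu t-t/4$ converts the threshold $c=-2\nu t+t/2-st^{1/3}$ into $-st^{1/3}$ on the right, giving (\ref{eq3.19a}).

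For (\ref{eq3.19b}) I would delete from the initial configuration (\ref{eqICx}) the $M$ slow particles and the normal particles with labels $1,\ldots,\nu t-t/4$; the hypothesis $\nu\geq 1/4$ guarantees this is a genuine deletion. The reduced process $\{\tilde x_k(\cdot)\}_{k\geq 1}$ has all jump rates $1$ and initial positions $\tilde x_k(0)=-2(\nu t-t/4+k)$, which is the half-flat initial condition translated by $-2(\nu t-t/4)$. Removing particles can only speed up the remaining ones, so basic coupling gives $x_{\nu t}(t)\leq\tilde x_{t/4}(t)$. Translation invariance then yields $\tilde x_{t/4}(t)\stackrel{(d)}{=}x^{\rm hf}_{t/4}(t)-2(\nu t-t/4)$, with $x^{\rm hf}$ the half-flat TASEP. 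Finally, the pointwise domination $\eta^{\rm hf}\leq\eta^{\rm step}$ is preserved by basic coupling and, by counting particles to the right of any site, gives $x^{\rm hf}_n(t)\leq x^{\rm step}_n(t)$ at every label $n\geq 1$. Chaining the three inequalities produces
\begin{equation*}
\Pb(x_{\nu t}(t)\geq c)\leq\Pb(x^{\rm step}_{t/4}(t)\geq c+2(\nu t-t/4)),
\end{equation*}
and substituting $c=-2\nu t+t/2-st^{1/3}$ makes the shifted threshold equal to $-st^{1/3}$, which is (\ref{eq3.19b}).

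The only non-routine step is the rate-modified attractiveness in the proof of (\ref{eq3.19a}): a bare comparison of occupation variables does not suffice when some particles jump at rate $\alpha<1$, and the coupling must be realized on the graphical Harris construction with independent thinning for the slow jumps. Everything else --- basic coupling, shift invariance of flat, translation of the dynamics, and $\eta^{\rm hf}\leq\eta^{\rm step}$ --- is standard.
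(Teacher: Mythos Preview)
Your argument for (\ref{eq3.19b}) is correct and close to the paper's: the paper removes the particles to the right of $-2\nu t+t/2$ and shifts the remaining ones rightward directly into a step configuration, whereas you pass through the shifted half-flat process and then invoke the occupation-variable domination $\eta^{\rm hf}\le\eta^{\rm step}$. Both routes are valid.

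Your argument for (\ref{eq3.19a}) has a genuine gap. You read $x$ as the two-speed process (\ref{eqICx})--(\ref{eqICv}) and claim that a Harris coupling with thinning yields $x^{\rm flat}_n(t)\le x_n(t)$ for all $n\ge 1$. This fails already for $M=2$: the initial occupation configurations are not ordered, since site~$1$ is occupied in (\ref{eqICx}) but empty in the flat system. Concretely, if the site clocks ring in the order $-2$, then $0$, then $-1$ (with no other rings before), the flat particle~$1$ reaches site~$0$, while in the two-speed process particle~$1$ is stuck at $-1$ because the slow particle at~$0$ is itself blocked by the slow particle at~$1$; hence $x^{\rm flat}_1>x_1$ at that instant. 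Having ``more particles ahead'' in cardinality is not the pointwise domination $\eta\le\eta^{\rm flat}$ that attractiveness requires, and the thinning for the slow rates does not repair this --- if anything it makes the original particle~$1$ even more blocked. In fact for small $\alpha$ and small $n$ the distributional inequality (\ref{eq3.19a}) is simply false for the process (\ref{eqICx})--(\ref{eqICv}).

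The resolution is that the process $x$ in this lemma, and throughout Section~\ref{SectAsymptIndep}, is the half-flat process $x_n(0)=-2n$, $n\ge 1$, with all rates equal to~$1$ --- i.e.\ the process $x^A$ of (\ref{eqICA}) --- not the full two-speed process. For $x^A$ the inclusion $\eta^A(0)\le\eta^{\rm flat}(0)$ does hold pointwise, so basic coupling gives $x^A_{\nu t}(t)\ge x^{\rm flat}_{\nu t}(t)$ immediately; combined with the shift $x^{\rm flat}_{\nu t}(t)\stackrel{(d)}{=}x^{\rm flat}_{t/4}(t)-2\nu t+t/2$ this is exactly the paper's one-line proof of (\ref{eq3.19a}).
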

\begin{proof}
By the basic coupling, we immediately have that
\begin{equation}\label{eq3.17}
x_{\nu t}(t) \geq x^{\rm flat}_{\nu t}(t) \stackrel{(d)}{=}x^{\rm flat}_{t/4}(t)-2\nu t+t/2.
\end{equation}
This gives (\ref{eq3.19a}). Secondly, for $\nu t\geq t/4$, consider TASEP with initial condition $\{x^{C}_n(0)=-2\nu t+t/2-n+1,n\geq 1\}$, i.e., start with the initial condition $x_n(0)$ from which one removes the particles to the right of $-2\nu t+ t/2$ and move the particles to the left of this position to the right producing a step initial condition ending at $-2\nu t+t/2$. In particular, particle with label $\nu t$ in the original process is moved to $x_{t/4}^C(0)$. Thus we have
\begin{equation}\label{eq3.18}
x_{\nu t}(t)\leq x^C_{t/4}(t) \stackrel{(d)}{=} -2\nu t+t/2+x^{\rm step}_{t/4}(t).
\end{equation}
Equations (\ref{eq3.17}) and (\ref{eq3.18}) implies (\ref{eq3.19b}).
\end{proof}
Bounds on the probabilities of the r.h.s.\ of (\ref{eq3.19a})-(\ref{eq3.19b}) are known and we report them here.
\begin{lem}\label{lemBound2}
Let $\nu \in (0,1)$. There exists a $t_0\in (0,\infty)$ such that for all $t\geq t_0$,
\begin{equation}\label{eq3.20}
\begin{aligned}
&\Pb(x^{\rm step}_{\nu t}(t)\geq (1-2\sqrt{\nu})t-s t^{1/3})\leq C_1\, e^{-c_1 (-s)^{3/2}},\quad s\leq 0,\\
&\Pb(x^{\rm step}_{\nu t}(t)\leq (1-2\sqrt{\nu})t -s t^{1/3})\leq C_2 \,e^{-c_2 s},\quad s\geq 0,\\
&\Pb(x^{\rm flat}_{t/4}(t)\leq -s t^{1/3})\leq C_3 \,e^{-c_3 s},\quad s\geq 0,
\end{aligned}
\end{equation}
where the constants $C_i,c_i$ are positive and independent of $s$. Further, for any given $\e>0$, the constants in the bounds for step initial conditions can be chosen independent of $\nu\in [\e,1-\e]$.
\end{lem}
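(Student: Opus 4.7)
These three bounds are standard moderate-deviation estimates for TASEP started from step, respectively flat, initial data, and appear in various forms in~\cite{BFS07,BF07,BFPS06}. The plan is to derive them from the Fredholm-determinant formula~(\ref{eq1}) specialized to the two initial conditions, by steepest-descent analysis of the correlation kernel. In both cases the kernel admits a double contour-integral representation with integrand of the form $e^{t(f(w)-f(z))}/(w-z)$ for an explicit phase $f$, with a unique relevant critical point $w_c$ on the positive real axis ($w_c=\sqrt{\nu}$ in the step case, $w_c=1/2$ in the flat case). The scaling of the lemma places the spatial variable exactly in the Airy window around $w_c$, and the analytic dependence of $w_c$ on $\nu$ will give the uniformity on $\nu\in[\e,1-\e]$.

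For the upper tail in the step case ($s\leq 0$), I would apply the Markov-type inequality
\begin{equation}
\Pb\bigl(x^{\rm step}_{\nu t}(t)\geq r\bigr)\leq \frac{1}{\nu t}\sum_{x\geq r}K^{\rm step}_{\nu t,t}(x,x),
\end{equation}
which follows from $\E[\#\{n : x^{\rm step}_n(t)\geq r\}]=\sum_{x\geq r}K(x,x)$ for determinantal point processes. Each diagonal entry $K(x,x)$ is then bounded by deforming both contours to steepest-descent paths passing through a point shifted from $w_c$ by $O(|s|^{1/2}t^{-1/3})$; the exponent at the shifted saddle contributes an Airy-type factor $e^{-\frac{4}{3}(-s)^{3/2}}$ per entry, and the geometric decay in $x-r$ makes the sum bounded by an $O(1)$ multiplicative constant.

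For the two lower-tail estimates I would instead use the Laplace-transform identity for determinantal point processes,
\begin{equation}
\E\bigl[e^{-\lambda N_{[r,\infty)}}\bigr]=\det\bigl(I-(1-e^{-\lambda})K\bigr)_{\ell^2([r,\infty))},
\end{equation}
combined with the Chernoff bound $\Pb(N_{[r,\infty)}<\nu t)\leq e^{\lambda \nu t}\,\E[e^{-\lambda N_{[r,\infty)}}]$. This bypasses the divergence of $\mathrm{tr}(K)$ to the left of the edge, which prevents the naive bound $|1-\det(I-K)|\leq\|K\|_1 e^{\|K\|_1}$ from being useful. A saddle-point analysis of the right-hand side with $\lambda$ of order $s\,t^{-1/3}$, optimized in $\lambda$, produces the claimed $Ce^{-cs}$ decay; the fact that one only recovers an exponential rather than the true cubic decay is the price of the Chernoff step, and is the reason the lemma states $e^{-c_2 s}$ rather than $e^{-c_2 s^3}$. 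The flat lower-tail bound is handled identically, with $w_c=1/2$.

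The main obstacle will be keeping the contour-deformation analysis uniform in $s$ over the full relevant range (not just asymptotically large $s$) and in $\nu\in[\e,1-\e]$; because the critical point, the steepest-descent directions, and the local quadratic behaviour of $f$ all depend analytically on $\nu$, the deformations can be chosen globally and the constants controlled uniformly, following the template of the analogous analysis in~\cite{BFS07}.
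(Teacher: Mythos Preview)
Your upper-tail argument does not give the claimed bound. With the $N$-particle kernel $K_{N,t}$ from~(\ref{eq1}) (here $N=\nu t$), the total trace equals $N$, and for $r=(1-2\sqrt{\nu})t-st^{1/3}$ with $s\leq 0$ the region $[r,\infty)$ contains essentially all of the $N$ particles; hence $\sum_{x\geq r}K(x,x)\approx \nu t$ and your Markov bound is $\approx 1$, not $e^{-c(-s)^{3/2}}$. The paper instead uses the inequality $\det(\Id-K)\leq e^{-\Tr K}$ (valid for $0\leq K\leq \Id$), applied on $(-\infty,r)$: since $\Pb(x^{\rm step}_{\nu t}(t)\geq r)=\det(\Id-K)_{(-\infty,r)}$, one needs a \emph{lower} bound on $\Tr K$ there, and Airy edge scaling gives $\Tr K\gtrsim c(-s)^{3/2}$. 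This is Widom's method~\cite{Wid02}, worked out for TASEP in~\cite{BFP12}.

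Your lower-tail argument is based on a false premise. You write that $\Tr(K)$ diverges to the left of the edge, forcing a Chernoff detour; but the kernel in~(\ref{eq1}) is the $N$-particle kernel, not the kernel of the full infinite configuration. For $r$ to the left of $(1-2\sqrt{\nu})t$ the region $(-\infty,r)$ lies outside the support of the $N$-particle ensemble, and $K_{N,t}$ decays exponentially there. One then simply bounds
\[
\Pb(x^{\rm step}_{\nu t}(t)<r)=1-\det(\Id-K)_{(-\infty,r)}\leq \|K\|_1\,e^{\|K\|_1}
\]
(or, equivalently, $\Pb(N_{(-\infty,r)}\geq 1)\leq \sum_{x<r}K(x,x)$), and the exponential kernel estimates of~\cite{BF07} give the $e^{-c s}$ decay directly. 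The flat case is identical. So the Laplace/Chernoff machinery is unnecessary, and the paper's route is both shorter and yields the same exponents.
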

The first estimate in (\ref{eq3.20}) was obtained in~\cite{BFP12} in terms of TASEP height function. The idea is to bound the Fredholm determinant which gives the distribution function of $x^{\rm step}_{t/4}(t)$ by the exponential of the trace of the kernel, see Section~4 of~\cite{BFP12}. The method was used before by Widom in~\cite{Wid02}. The other two estimates in (\ref{eq3.20}) follow directly from the exponential estimates on the correlation kernel for step initial condition, and for flat initial condition, see for instance in Proposition 5.3 of~\cite{BF07}.

\begin{cor}\label{corBounds}
As a consequence of Lemmas~\ref{lemBounds1} and~\ref{lemBound2}, for any $\nu\geq 1/4$, we have
\begin{equation}\label{eqBounds}
\begin{aligned}
&\Pb(x_{\nu t}(t)\leq -2\nu t+t/2-s t^{1/3})\leq C_3 e^{-c_3 s},\quad s\geq 0\\
&\Pb(x_{\nu t}(t)\geq -2\nu t+t/2-s t^{1/3})\leq C_1 e^{-c_1 (-s)^{3/2}},\quad s\leq 0.
\end{aligned}
\end{equation}
\end{cor}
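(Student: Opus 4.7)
The plan is to compose Lemma~\ref{lemBounds1} and Lemma~\ref{lemBound2} directly: the former reduces tail estimates for $x_{\nu t}(t)$ to tail estimates for the reference processes $x^{\rm flat}_{t/4}(t)$ and $x^{\rm step}_{t/4}(t)$, and the latter supplies the explicit exponential bounds on those reference processes. No new probabilistic input is needed.

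Concretely, for the lower tail ($s\geq 0$) I would concatenate the first inequality of Lemma~\ref{lemBounds1} (valid for all $\nu\geq 0$) with the third estimate of Lemma~\ref{lemBound2} to get
\[
\Pb(x_{\nu t}(t)\leq -2\nu t+t/2-s t^{1/3})\leq \Pb(x^{\rm flat}_{t/4}(t)\leq -s t^{1/3})\leq C_3 e^{-c_3 s}.
\]
For the upper tail ($s\leq 0$) the hypothesis $\nu\geq 1/4$ is exactly the condition needed to apply the second inequality of Lemma~\ref{lemBounds1}, giving
\[
\Pb(x_{\nu t}(t)\geq -2\nu t+t/2-s t^{1/3})\leq \Pb(x^{\rm step}_{t/4}(t)\geq -s t^{1/3}).
\]
I would then specialize the first estimate of Lemma~\ref{lemBound2} to the parameter value $\nu=1/4$; since $1-2\sqrt{1/4}=0$, that estimate reduces precisely to $\Pb(x^{\rm step}_{t/4}(t)\geq -s t^{1/3})\leq C_1 e^{-c_1(-s)^{3/2}}$ for $s\leq 0$, which is the desired bound.

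There is essentially no obstacle here, and I expect the written proof to be only a few lines. The one bookkeeping item is to note that the inequality is understood for $t$ at least as large as the threshold $t_0$ from Lemma~\ref{lemBound2}, and that the constants $C_1,c_1,C_3,c_3$ are inherited from that lemma and in particular do not depend on $s$ (nor on $\nu$, once $\nu\in[1/4,1-\e]$, by the uniformity statement at the end of Lemma~\ref{lemBound2}). The corollary is then ready to be plugged into the proof of Theorem~\ref{thmcorrelations}: both tails decay faster than any power of $|s|$, which will be enough to make the union bound over the trajectory ${\cal X}$ small when $s$ is chosen of order $t^\e$.
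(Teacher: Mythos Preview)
Your proposal is correct and matches the paper's approach exactly: the corollary is stated without a separate proof in the paper because it is indeed just the concatenation of (\ref{eq3.19a}) with the third line of (\ref{eq3.20}), and of (\ref{eq3.19b}) with the first line of (\ref{eq3.20}) specialized to $\nu=1/4$ (so that $1-2\sqrt{\nu}=0$). One small remark: you do not even need the uniformity in $\nu$ from the end of Lemma~\ref{lemBound2}, since after applying Lemma~\ref{lemBounds1} the bounds involve only $x^{\rm flat}_{t/4}(t)$ and $x^{\rm step}_{t/4}(t)$, which no longer depend on the original $\nu$.
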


Further, matching the law of large numbers in Proposition~\ref{prop3.4} we obtains $N(\tau t)\sim (\nu-(1-\tau)/4)t$ for large $t$ and $\tau \in [0,1]$. First, we want to show that the fluctuations of $N(\tau t)$ are in a $t^{2/3+\e}$ region with high probability. From this we will deduce that also the position of $x_{N(\tau t)}(\tau t)$ will be in a region of order $t^{2/3+\e}$ around the characteristic line.

\begin{prop}\label{PropGoodSet}
Let $0<\e<1/3$ and define the good set
\begin{equation}
\Omega_G=\{\omega: |N(\tau t)-(\nu-\tfrac14(1-\tau))t|\leq t^{2/3+\e}\textrm{ for all }0\leq \tau\leq 1\}.
\end{equation}
Then, for all $t$ large enough,
\begin{equation}
\Pb(\Omega_G)\geq 1- \tilde C e^{-\tilde c t^{2\e}}
\end{equation}
for some constants $\tilde C,\tilde c\in (0,\infty)$.
\end{prop}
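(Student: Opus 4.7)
The plan is to establish the bound pointwise at each fixed $\tau$, then upgrade to a uniform bound using the monotonicity of $s\mapsto N(s)$ together with a discretisation. Write $k^*(\tau)=(\nu-(1-\tau)/4)t$ and $R=t^{2/3+\e}$.

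For the pointwise estimate the key tool is a minimisation characterisation of $N(\tau t)$. For each $k\leq N$, let $\bar x^{(k)}$ denote the TASEP started at time $\tau t$ with initial data $\bar x^{(k)}_n(\tau t)=x_k(\tau t)-(n-k)$, $n\geq k$, evolved by the basic coupling using the Poisson clocks after time $\tau t$. Attractiveness---after extending $\bar x^{(k)}$ with virtual particles at $+\infty$ for labels $n<k$, which never block anybody---yields $x_N(t)\leq \bar x^{(k)}_N(t)$ for every $k\leq N$, while Proposition~\ref{prop3.4} provides equality for $k=N(\tau t)$. Thus $N(\tau t)$ is a minimiser of $k\mapsto \bar x^{(k)}_N(t)$. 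The decomposition $\bar x^{(k)}_N(t)=x_k(\tau t)+\xi^{(k)}$ with $\xi^{(k)}\stackrel{(d)}{=}x^{\rm step}_{N-k+1}((1-\tau)t)$ independent of $x_k(\tau t)$ gives, after using the laws of large numbers, a mean
\begin{equation*}
D(k)=-2k+\tau t/2+(1-\tau)t-2\sqrt{(N-k)(1-\tau)t}+o(t^{1/3}),
\end{equation*}
which is strictly convex in $k$ with minimum at $k^*(\tau)$ and $D(k^*\pm R)-D(k^*)\sim R^2/((1-\tau)t)=t^{1/3+2\e}/(1-\tau)$. This ``curvature signal'' dwarfs by a factor $t^{2\e}$ the $t^{1/3}$ fluctuation scale of the individual $\bar x^{(k)}_N(t)$.

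The pointwise bound then follows by contradiction. On $\{N(\tau t)\geq k^*+R\}$ some $k\geq k^*+R$ is a minimiser, hence $\bar x^{(k)}_N(t)\leq \bar x^{(k^*)}_N(t)$; setting the threshold $V_0=D(k^*)+\tfrac12(D(k^*+R)-D(k^*))$ I would split
\begin{equation*}
\Pb(N(\tau t)\geq k^*+R)\leq \Pb(\bar x^{(k^*)}_N(t)\geq V_0)+\sum_{k\geq k^*+R}\Pb(\bar x^{(k)}_N(t)\leq V_0).
\end{equation*}
The first term is an upward tail of a sum of two independent variables with $\exp(-cs^{3/2})$ upper tails by Corollary~\ref{corBounds} and the first estimate of Lemma~\ref{lemBound2}, so it is $\leq Ce^{-ct^{3\e}}$. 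Each summand is a downward tail, $\leq Ce^{-ct^{2\e}}$ by the $\exp(-cs)$ lower tails of Lemmas~\ref{lemBounds1} and~\ref{lemBound2}; the $O(t)$ terms still sum to $\leq C'e^{-c't^{2\e}}$. The lower tail $\{N(\tau t)\leq k^*-R\}$ is handled symmetrically. To uniformise in $\tau$ I exploit that $s\mapsto N(s)$ is non-decreasing in forward time. I choose a grid $\tau_i=4it^{-1/3+\e}$, $i=0,1,\ldots,O(t^{1/3-\e})$, so that $k^*(\tau_{i+1})-k^*(\tau_i)=R$; applying the pointwise bound at each $\tau_i$, monotonicity on each slab $[\tau_i,\tau_{i+1}]$ pins $N(\tau t)$ within a small multiple of $R$ from $k^*(\tau)$, and the final slab close to $\tau=1$ is handled by the trivial bound $N(\tau t)\leq N$. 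A union bound over the $O(t^{1/3-\e})$ grid points absorbs the polynomial prefactor into the stated $\exp(-\tilde c t^{2\e})$.

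The main obstacle I anticipate is securing the downward-tail estimate $\Pb(\bar x^{(k)}_N(t)\leq V_0)\leq Ce^{-ct^{2\e}}$ with constants uniform in $k$: for $k$ close to $N$ the embedded step TASEP $x^{\rm step}_{N-k+1}((1-\tau)t)$ has a density parameter outside the range $[\e',1-\e']$ where the constants in Lemma~\ref{lemBound2} are uniform. I resolve this by splitting the sum over $k$ into a bulk range, where the Lemma applies directly, and a boundary range $k\in(N-\delta(1-\tau)t,N]$ where the mean shift $D(k)-V_0$ is of order $(1-\tau)t$, so that even crude Poisson-type bounds on $x^{\rm step}_{N-k+1}$ produce probabilities far smaller than $e^{-ct^{2\e}}$.
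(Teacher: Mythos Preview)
Your approach is correct and uses the same core inputs as the paper---Proposition~\ref{prop3.4}, the tail bounds of Corollary~\ref{corBounds} and Lemma~\ref{lemBound2}, and the quadratic curvature $D(k^*\pm R)-D(k^*)\sim t^{1/3+2\e}$ that beats the $t^{1/3}$ fluctuations---but the organisation differs in one notable way. The paper exploits that $s\mapsto N(s)$ increases by jumps of size~$1$: on $\Omega_{B^+}$ the \emph{last} time $\tau$ at which $N(\tau t)$ leaves the window satisfies $N(\tau t)=\widetilde N(\tau t):=k^*(\tau)+R$ exactly, so one only needs to control $\bar x^{(\widetilde N(\tau t))}_N(t)$ for this single value of $k$ at each $\tau$, and then discretises $\tau$ on a mesh $1/t$ with a Poisson bound to bridge between grid points. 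Your argument instead fixes $\tau$, uses the minimisation characterisation of $N(\tau t)$ (a clean observation the paper does not state explicitly), and sums over \emph{all} candidate minimisers $k\geq k^*+R$. This is also valid, but it is what forces you to confront the uniformity-of-constants issue you flag: the paper sidesteps that issue entirely because $\widetilde N(\tau t)$ always sits at density parameter close to $1/4$ in the step piece. Your coarser time grid $\tau_i=4it^{-1/3+\e}$ together with monotonicity is a nice simplification over the paper's fine mesh and Poisson bridging. One small caveat: the boundary-range patch you describe is for $k$ near $N$ (Case~(a)); the symmetric Case~(b) with $k\leq k^*-R$ can push the step density parameter above $1-\e'$ or even above~$1$, and that range needs its own crude bound as well, though it is equally routine.
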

\begin{proof}
We need to estimate $\Pb(\Omega_G^c)$. If $\omega\in \Omega_G^c$ we can define the last time such that $|N(\tau(\omega)t)-t(\nu-\tfrac14(1-\tau(\omega)))|>t^{2/3+\e}$ and denote it by $\tau(\omega)$. There are two cases which can be analyzed similarly:\\[0.5em]
\emph{Case (a):} $N(\tau(\omega)t)-t(\nu-\tfrac14(1-\tau(\omega)))>t^{2/3+\e}$,\\[0.5em]
\emph{Case (b):} $N(\tau(\omega)t)-t(\nu-\tfrac14(1-\tau(\omega)))<-t^{2/3+\e}$.

\medskip
Consider first \emph{Case (a)}. Since the index process $N$ has one-sided jumps, it means that
$N(\tau(\omega)t)=\lfloor t(\nu-\tfrac14(1-\tau(\omega)))+t^{2/3+\e}\rfloor +1$. For the rest of the estimates, we will never write explicitly the integer parts and also the $+1$ is irrelevant. The goal is to use the estimates of Corollary~\ref{corBounds} and Lemma~\ref{lemBound2}, together with the relation of Proposition~\ref{prop3.4}, to bound the probability of this case.

Let us define $\widetilde N(\tau t)=t(\nu-\tfrac14(1-\tau))+t^{2/3+\e}$ and consider a finite number of times, namely $\tau \in I=\{0,1/t,2/t,\ldots,1\}$. Define the events
\begin{equation}
\begin{aligned}
E_{\tau}&=\{\omega: x_{\widetilde N(\tau t)}(\tau t)\geq -2\nu t+t/2-2 t^{2/3+\e}-\tfrac12 t^{2\e+1/3}\},\\
\widetilde E_{\tau}&=\{\omega: x^{\rm step}_{N-\widetilde N(\tau t)+1}((1-\tau) t)\geq 2 t^{2/3+\e} + \tfrac32 t^{2\e+1/3}\}.
\end{aligned}
\end{equation}
Notice that we need only to consider $\tau\leq 1-4 t^{\e-1/3}$, since otherwise $\widetilde N(\tau t)>\nu t$ and thus $N(\tau t)$ can not take the value $\widetilde N(\tau t)$.

\medskip
\noindent \emph{Bounds on $\Pb(E_\tau)$.}\\
1) For $\tau\leq t^{2\e-2/3}$, i.e., $\tau t\leq t^{2\e+1/3}$, then $\Pb(E_\tau)=1$ since particle $\widetilde N(\tau t$) starts already to the right of $-2\nu t+t/2-2 t^{2/3+\e}-\tfrac12 t^{2\e+1/3}$.\\
2) For $\tau\geq t^{2\e-2/3}$, we can apply Corollary~\ref{corBounds} with $s=\frac{t^{2\e}}{2\tau^{1/3}}$ and obtain $\Pb(E_\tau)\geq 1-C e^{-c t^{2\e}}$ for some constants $C,c$.

\medskip
\noindent \emph{Bounds on $\Pb(\widetilde E_\tau)$.}\\
For $1-\tau\geq 4t^{\e-1/3}$, we can apply Lemma~\ref{lemBound2} with $s=\frac{t^{2\e}(1+3\tau)}{2(1-\tau)^{4/3}}$ and obtain $\Pb(\widetilde E_\tau)\geq 1-C e^{-c t^{2\e}}$ for some constants $C,c$.

\medskip
The above estimates imply
\begin{equation}\label{eq3.28}
\Pb\left(x_{\widetilde N(\tau t)}(\tau t)+x^{\rm step}_{N-\widetilde N(\tau t)+1}((1-\tau) t)\geq -2\nu t+t/2+ t^{2\e+1/3}\right)\geq 1-C e^{-c t^{2\e}}
\end{equation}
for some $C,c\in (0,\infty)$ independent of $\tau$.

Next, for $s\in [0,t]$, there is a $\tau \in I=\{0,1/t,2/t,\ldots,1\}$ with $s-\tau t\in [0,1]$. Define the event
\begin{equation}\label{eq3.29}
G_\tau=\{\omega: |x_{N(s)}(s)-x_{N(\tau t)}(\tau t)|\leq t^{1/3}\textrm{ for some }s\in[\tau t,\tau t+1]\}.
\end{equation}
Then, $\Pb(G_\tau)\geq 1-C e^{-c t^{1/3}}$ since $x_{N(s)}(s)-x_{N(\tau t)}(\tau t)$ is bounded by a Poisson point process with intensity $2$. The same holds for the process with step initial condition. Combining (\ref{eq3.28}) and (\ref{eq3.29}) we obtain
\begin{equation}\label{eq3.30}
\begin{aligned}
&\Pb\left(x_{\widetilde N(\tau t)}(\tau t)+x^{\rm step}_{N-\widetilde N(\tau t)+1}((1-\tau) t)\geq -2\nu t+t/2+ t^{2\e+1/3}\textrm{ for all }\tau\in[0,1]\right)\\
&\geq 1-C t e^{-c t^{2\e}}.
\end{aligned}
\end{equation}
To conclude the proof, we introduce the events
\begin{equation}
\begin{aligned}
\Omega_{B^+}&=\{\omega: N(\tau t)-(\nu-\tfrac14 (1-\tau))t>t^{2/3+\e}\textrm{ for some }0\leq \tau\leq 1\},\\
\Omega_{B^-}&=\{\omega: N(\tau t)-(\nu-\tfrac14 (1-\tau))t<-t^{2/3+\e}\textrm{ for some }0\leq \tau\leq 1\},\\
G^+_\tau&=\{\omega: x_{\widetilde N(\tau t)}(\tau t)+x^{\rm step}_{N-\widetilde N(\tau t)+1}((1-\tau) t)\geq -2\nu t+t/2+ t^{2\e+1/3}\},\\
G^-_\tau&=\{\omega: x_{\widetilde N(\tau t)}(\tau t)+x^{\rm step}_{N-\widetilde N(\tau t)+1}((1-\tau) t)\leq -2\nu t+t/2-t^{2\e+1/3}\},
\end{aligned}
\end{equation}
as well as $\Omega_G^c=\Omega_B=\Omega_{B^+}\cup\Omega_{B^-}$, $F_\pm=\Omega_{B^\pm}^c\bigcap_{\tau\in [0,1]}G_{\tau,\pm}$. We have $\Pb(\Omega_B)\leq \Pb(\Omega_B^+)+\Pb(\Omega_B^-)$ and
\begin{equation}
\Pb(\Omega_B^\pm)\leq \Pb(F_\pm)+\Pb(\cup_{\tau\in[0,1]}G_{\tau,\pm}^c).
\end{equation}
From (\ref{eq3.30}) we have the estimate $\Pb(\cup_{\tau\in[0,1]}G_{\tau,+}^c)\leq 3C t e^{-c t^{2\e}}$. On the other hand, by Proposition~\ref{prop3.4} we have
\begin{equation}
\Pb(F_+) \leq \Pb(x_N(t)\geq -2\nu t+t/4+t^{2\e+1/3})\leq C_1 e^{-c_1 t^{3\e}},
\end{equation}
where the last bound follows from Corollary~\ref{corBounds}. To resume, since $\e\in (0,1/3)$ we have $\Pb(\Omega_B^+)\leq \tilde C e^{-\tilde c t^{2\e}}$ for all $t\geq t_0$.

The bound for \emph{Case (b)} are obtained similarly to the ones of \emph{Case (a)} providing an analogue bound for $\Pb(F_-)$. For instance, instead of $\widetilde N$ we have $\widehat N(\tau t)=t(\nu-\tfrac14(1-\tau))-t^{2/3+\e}$. For $\tau t\leq t^{2\e+1/3}/2$, we can bound the position of the particle $x_{\widehat N(\tau t)}(\tau t)$ by the position without any other particles and gets the required bound.
\end{proof}

Proposition~\ref{PropGoodSet} tells us that $N(\tau t)$ is localized in a $t^{2/3+\e}$ neighborhood of a deterministic value. The second ingredient is the localization of $x_{\nu \tau t}(\tau t)$ from Corollary~\ref{corBounds}.
\begin{prop}\label{propLocalize}
Let $\e>0$ as above. For all $\tau\in [0,1]$ and $\nu\geq 1/4$,
\begin{equation}
\Pb(|x_{\nu t}(\tau t)-(-2\nu+\tau/2) t|\geq t^{2/3+\e})\leq \hat C e^{-\hat c t^{1/3+\e}}.
\end{equation}
\end{prop}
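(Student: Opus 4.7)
The plan is to derive Proposition~\ref{propLocalize} as a time-rescaled version of Corollary~\ref{corBounds}, together with a separate treatment of very small $\tau$. The key observation is that the proofs of Lemmas~\ref{lemBounds1}, \ref{lemBound2} and Corollary~\ref{corBounds} are scale-invariant in the time parameter and apply equally well at any time $t'\geq t_0$.

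For $\tau t\geq t_0$ I would rerun the basic coupling of Lemma~\ref{lemBounds1} at time $\tau t$ rather than $t$. On the one hand, $x_{\nu t}(\tau t)\geq x^{\rm flat}_{\nu t}(\tau t)$ together with translation invariance of the flat initial condition gives $x^{\rm flat}_{\nu t}(\tau t)\stackrel{(d)}{=} x^{\rm flat}_{\tau t/4}(\tau t)-2\nu t+\tau t/2$. On the other hand, embedding into a step configuration starting at position $-2\nu t+\tau t/2$ (legitimate since $\nu t\geq \tau t/4$ follows from $\nu\geq 1/4$ and $\tau\leq 1$) gives $x_{\nu t}(\tau t)\leq -2\nu t+\tau t/2+x^{\rm step}_{\tau t/4}(\tau t)$. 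Feeding these into Lemma~\ref{lemBound2} at time $\tau t$ with the choice $|s|(\tau t)^{1/3}=t^{2/3+\e}$, i.e.\ $|s|=t^{1/3+\e}\tau^{-1/3}$, produces
\begin{equation*}
\Pb\bigl(|x_{\nu t}(\tau t)-(-2\nu+\tau/2)t|\geq t^{2/3+\e}\bigr)\leq C_3 e^{-c_3 t^{1/3+\e}\tau^{-1/3}}+C_1 e^{-c_1 t^{(1+3\e)/2}\tau^{-1/2}}.
\end{equation*}
Since $\tau\leq 1$ gives $\tau^{-1/3},\tau^{-1/2}\geq 1$, the right-hand side is bounded by $\hat C e^{-\hat c t^{1/3+\e}}$ uniformly in $\tau\in[t_0/t,1]$.

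For the remaining regime $\tau t\leq t_0$, the number $J$ of jumps made by particle $\nu t$ during $[0,\tau t]$ is stochastically dominated by a Poisson variable of mean at most $t_0$ (since the jump rate is at most $1$), so $|x_{\nu t}(\tau t)-(-2\nu+\tau/2)t|\leq J+t_0/2$, and a standard Poisson tail bound on $\Pb(J\geq t^{2/3+\e})$ is far stronger than needed once $t$ is large. I do not anticipate any real obstacle. The only point requiring some care is the uniformity in time of the constants in Lemma~\ref{lemBound2}, which is immediate from the exponential estimates on the correlation kernels for step and flat initial conditions.
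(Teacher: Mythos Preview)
Your proposal is correct and follows essentially the same route as the paper's proof: both split into a large-$\tau$ regime, where Corollary~\ref{corBounds} is applied at the rescaled time $\tau t$ with $|s|=t^{1/3+\e}\tau^{-1/3}$, and a small-$\tau$ regime, where the displacement is controlled by a Poisson bound on the number of jumps. The only cosmetic difference is the threshold: the paper cuts at $\tau t=t^{2/3+\e}/4$ (so that in the small-$\tau$ regime the Poisson mean is $O(t^{2/3+\e})$ and $\tau t/2$ is automatically below the window), whereas you cut at $\tau t=t_0$; either choice works.
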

\begin{proof}
For $\tau t>t^{{2/3+\e}}/4$, this follows from the bounds of Corollary~\ref{corBounds}. For $\tau t<t^{2/3+\e}/4$, this follows by the fact that $x_{\nu t}(\tau t)$ is on the right of its initial position, $-2\nu t$ and its number of steps is stochastically dominated by a the ones of a one-sided continuous random walk with rate $1$.
\end{proof}

Now we can prove Theorem~\ref{thmcorrelations}.
\begin{proof}[Proof of Theorem~\ref{thmcorrelations}]
From Propositions~\ref{PropGoodSet} we have that with probability at least $1-\tilde C e^{\tilde c t^{2\e}}$,
\begin{equation}
x_{[\nu-(1-\tau)/4]t-t^{2/3+\e}}(\tau t)\geq x_{N(\tau t)}(\tau t)\geq x_{[\nu-(1-\tau)/4]t+t^{2/3+\e}}(\tau t)
\end{equation}
for all $\tau\in [0,1]$. From Proposition~\ref{propLocalize}, with probability at least $1-\hat C e^{-\hat c t^{1/3+\e}}$,
\begin{equation}
\begin{aligned}
x_{[\nu-(1-\tau)/4]t-t^{2/3+\e}}(\tau t) &\leq -2\nu t+t/2+3 t^{2/3+\e},\\
x_{[\nu-(1-\tau)/4]t+t^{2/3+\e}}(\tau t) &\geq -2\nu t+t/2-3 t^{2/3+\e},
\end{aligned}
\end{equation}
for all $\tau\in [0,1]$, which ends the proof.
\end{proof}

\section{Proof of Theorem~\ref{ThmMain}}\label{SectProofMainThm}
For the proof we employ Lemma~\ref{lemMinimum} on the decomposition of the process in two simpler cases. Their limiting distributions are the content of Propositions~\ref{lemGOEextended} and~\ref{prop4.2}.
\begin{prop}\label{lemGOEextended}
We have
\begin{equation}
\lim_{t\to\infty}\Pb\left(x^A_{\frac{1-\alpha}{2}t+\eta t^{1/2}}(t)\leq (\alpha-\tfrac12)t-2\eta t^{1/2}-s t^{1/3}\right)=F_{\rm GOE}(2s),
\end{equation}
with $F_{\rm GOE}$ denotes the GOE Tracy-Widom distribution function.
\end{prop}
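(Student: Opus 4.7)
The strategy is to reduce Proposition~\ref{lemGOEextended} to Lemma~\ref{lemGOE} by exploiting the translation invariance of the flat initial condition $x^A_n(0)=-2n$. A label shift $n\mapsto n+k$ is matched exactly by a position shift of $-2k$, which is precisely the shift $-2\eta t^{1/2}$ appearing in the centering (\ref{eq2.6}). The core step is to establish
\[
x^A_{(1-\alpha)t/2+\eta t^{1/2}}(t)=x^A_{(1-\alpha)t/2}(t)-2\eta t^{1/2}+o(t^{1/3})
\]
in probability, after which Lemma~\ref{lemGOE} immediately delivers the claimed $F_{\rm GOE}(2s)$ limit.

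To implement this, I would introduce the bi-infinite flat TASEP $x^{\rm bi}$ with $x^{\rm bi}_n(0)=-2n$ for all $n\in\Z$, which is exactly translation invariant: $x^{\rm bi}_{n+k}(t)+2k\stackrel{(d)}{=}x^{\rm bi}_n(t)$ for every $k\in\Z$. Coupling $x^A$ and $x^{\rm bi}$ via the basic coupling gives $x^A_n(t)\geq x^{\rm bi}_n(t)$ for $n\geq 1$, and one then shows that the inequality is in fact an equality with probability tending to $1$ for $n=\frac{1-\alpha}{2}t+O(t^{1/2})$. This is where $\alpha<1/2$ enters: the rarefaction fan in $x^A$ (where the density interpolates from $0$ at $x=t$ to $1/2$ at $x=0$) reaches only labels up to $\sim t/4$ by time $t$, while $(1-\alpha)/2>1/4$ places our particle macroscopically inside the flat region where $x^A$ and $x^{\rm bi}$ coincide. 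A quantitative version follows from the localization argument behind Theorem~\ref{thmcorrelations}, extended to cover also the influence of the initial configuration: the position $x^A_n(t)$ only feels the initial data in a deterministic $t^{2/3+\e}$-neighborhood of its characteristic, which is at macroscopic distance from the half-line $\{j\geq 0\}$ where $x^A$ and $x^{\rm bi}$ disagree.

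Combining this coincidence with the exact translation invariance of $x^{\rm bi}$, applied with $k=\lfloor \eta t^{1/2}\rfloor$ (the fractional part contributes only $O(1)$ and is absorbed into the $o(t^{1/3})$ error), yields the desired approximation and closes the argument. The main obstacle I anticipate is formalizing the claim that the initial-condition discrepancy between $x^A$ and $x^{\rm bi}$ is confined to the rarefaction fan on the relevant scale: Theorem~\ref{thmcorrelations} is stated for the Poisson clocks, so one needs to either extend its proof to the initial data or argue via a separate coupling bound controlling the position of the leftmost rarefaction particle of $x^A$. An alternative that bypasses this step entirely would be to invoke the multi-point convergence of flat TASEP to the Airy$_1$ process from \cite{BFS07}: Proposition~\ref{lemGOEextended} then amounts to the statement that this limit process is continuous at its transversal parameter $u=\eta t^{-1/6}\to 0$, recovering the one-point GOE marginal in the limit.
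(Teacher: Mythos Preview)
Your proposal is correct, but the paper's own proof is a single sentence: Proposition~\ref{lemGOEextended}, like Lemma~\ref{lemGOE}, follows directly from the one-point limit (2.21) of~\cite{BFS07}. That limit theorem already treats the half-flat initial condition at a general label in the flat region, and the shift $\eta t^{1/2}$ in the label is far below the $t^{2/3}$ transversal scale, so the kernel asymptotics are unchanged from the $\eta=0$ case. Your closing alternative --- invoke the Airy$_1$ convergence of~\cite{BFS07} and note that the transversal parameter $\eta t^{-1/6}\to 0$ --- is essentially this argument.

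Your main approach, through the bi-infinite flat system $x^{\rm bi}$ and exact translation invariance, is a genuinely different and more structural route. It explains \emph{why} the $\eta t^{1/2}$ label shift is cancelled precisely by the $-2\eta t^{1/2}$ in the centering, without revisiting any kernel analysis, and it reduces the input from~\cite{BFS07} to the single point $\eta=0$. The cost is the coupling step $x^A_n(t)=x^{\rm bi}_n(t)$ with high probability for $n\sim\tfrac{1-\alpha}{2}t$, which, as you correctly flag, is not literally covered by Theorem~\ref{thmcorrelations} (stated for the Poisson clocks, not the initial data). An argument in the spirit of Lemma~\ref{lemMinimum} --- tracking the rightmost label influenced by the missing particles $n\leq 0$ --- does the job, using that $(1-\alpha)/2>1/4$ keeps particle $n$ macroscopically inside the flat region. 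This is more work than the paper invests, but the payoff is a proof in the spirit of the paper's broader program: replacing an analytic computation by a coupling argument.
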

\begin{proof}
It is a slight extension of Lemma~\ref{lemGOE}, which also follows from the limit (2.21) of~\cite{BFS07}.
\end{proof}
The process with initial condition (\ref{eqICB}) has been studied from the LPP/sample covariance matrix point of view already in~\cite{BBP06}. A direct approach is possible also using the Fredholm determinant of the particle of~\cite{BF07}, since for step-initial conditions (and general jump rates) biorthogonalization is explicit and the correlation kernel as well, see Corollary~2.26 of~\cite{BF08}. The result of~\cite{BBP06} is the following.
\begin{prop}\label{prop4.2}
It holds
\begin{equation}\label{eq4.2}
\lim_{t\to\infty}\Pb\left(x^B_{\frac{1-\alpha}{2}t+\eta t^{1/2}}(t)\leq (\alpha-\tfrac12)t-2\eta t^{1/2} -\sigma \xi t^{1/2}\right)=F_{\rm GUE(M)}(\xi+\xi_c),
\end{equation}
with $F_{\rm GUE(M)}$ denotes the distribution of the largest eigenvalue of a $M\times M$ GUE random matrix, and with the constants $\sigma,\xi_c$ as in Theorem~\ref{ThmMain}.
\end{prop}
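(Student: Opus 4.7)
The plan is to start from the Fredholm determinant representation~\eqref{eq1} for $\Pb(x^B_n(t)\geq x)$, using the explicit correlation kernel given in Corollary~2.26 of~\cite{BF08} for TASEP with step initial data and particle-dependent jump rates. In our setting the rates are $v_k=\alpha$ for the first $M$ labels (the slow particles $n=-M+1,\dots,0$) and $v_k=1$ for the remaining labels. The kernel takes the form of a nested double contour integral, and grouping the two species produces a factor $\bigl(\tfrac{w-\alpha}{z-\alpha}\bigr)^{M}$ coming from the slow particles, times a factor of $(w/z)^{n-M}$ (or a similar shift) from the rate-$1$ particles, times a Gaussian-type exponential weight with prefactor $e^{t\phi(w)}/e^{t\phi(z)}$.

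Next I would carry out the steepest-descent analysis. With the scaling $n=\tfrac{1-\alpha}{2}t+\eta t^{1/2}$, the saddle point of the rate-$1$ phase sits at $w_c=\alpha$, which is exactly where the slow-particle factor has its $M$-fold pole/zero: this is the BBP critical regime of~\cite{BBP06}. The natural local change of variables is $w=\alpha+\tilde w/(\sigma t^{1/2})$, $z=\alpha+\tilde z/(\sigma t^{1/2})$. The constant $\sigma=\sqrt{\alpha(1-2\alpha)/(2(1-\alpha))}$ then appears as the (inverse square root of the) second derivative of $\phi$ at $w_c$, so that the quadratic piece of $t(\phi(w)-\phi(z))$ contributes $\tfrac12\tilde w^2-\tfrac12\tilde z^2$ in the limit. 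The correction $\xi_c=\eta(1-2\alpha)/((1-\alpha)\sigma)$ emerges from matching the first-order (in $\eta t^{1/2}$) expansion of the phase against the $-2\eta t^{1/2}$ piece of the observation point $x=(\alpha-\tfrac12)t-2\eta t^{1/2}-\sigma\xi t^{1/2}$.

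Putting these pieces together, the rescaled kernel converges pointwise to
\[
\frac{1}{(2\pi\I)^2}\oint d\tilde z\int d\tilde w\,\frac{e^{\tilde w^{2}/2-\tilde w(\xi+\xi_c)}}{e^{\tilde z^{2}/2-\tilde z(\xi+\xi_c)}}\,\frac{\tilde w^{M}}{\tilde z^{M}}\,\frac{1}{\tilde w-\tilde z},
\]
which is precisely $K_{\mathrm{GUE}(M)}$ after an innocuous conjugation that does not change the Fredholm determinant. Hence the Fredholm determinant on $L^{2}((\xi+\xi_c,\infty))$ tends to $F_{\mathrm{GUE}(M)}(\xi+\xi_c)$, which is~\eqref{eq4.2}.

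The main obstacle, as always in such BBP-type asymptotics, is the uniform tail control needed to exchange the limit with the Fredholm expansion. One must choose concrete contours (the $z$-contour a small circle around $\alpha$, the $w$-contour a vertical line crossing to the right of $\alpha$, both deformed so that $\Re(\phi(w)-\phi(z))$ has the correct sign), and verify that the $t$-dependent higher-order terms in the Taylor expansion of $\phi$ do not destroy the Gaussian decay far from the critical point. These estimates are entirely parallel to those in~\cite{BBP06} in the LPP formulation, and to the analogous saddle-point bounds in~\cite{BFS07,BF08}, so no new technical difficulty arises: exactly the same deformation and bounds can be transplanted to the kernel of Corollary~2.26 of~\cite{BF08}.
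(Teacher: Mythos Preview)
Your approach is essentially the same as the paper's: both start from the explicit kernel of Corollary~2.26 in~\cite{BF08}, identify the BBP-critical saddle where the slow-particle factor becomes singular, rescale locally on the $t^{1/2}$ scale to produce $K_{\mathrm{GUE}(M)}$, and defer the tail estimates as standard (the paper likewise only sketches the asymptotics and cites~\cite{BBP06}). The paper works in the parametrization with critical point $\alpha^{-1}$ rather than $\alpha$ (a harmless change of variables $w\mapsto 1/w$) and explicitly separates off the exponentially small contribution from the second $z$-pole, but otherwise the arguments coincide.
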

\begin{proof}
The result has been already established in the sample covariance matrices/LPP picture~\cite{BBP06}. Therefore we indicate here how to get the limiting correlation kernel directly from the formulas for particle positions, but we will not do the details of the asymptotic analysis (which are by now quite standard). The control of the decay for large $s_1,s_2$, which implies the convergence of the Fredholm determinant, and the details on the steep descent paths are not provided here.

From Corollary~2.26 of~\cite{BF08}, for TASEP with particles starting from $y_n(0)=-n$, $n\geq 1$, and with generic jump rates $\alpha_n$, $n\geq 1$, we have
\begin{equation}
\Pb(y_n(t)\geq \xi)=\det(\Id-K_{n,t})_{\ell^2((-\infty,\xi))},
\end{equation}
where the correlation kernel $K_{n,t}$ is given by\footnote{The notation $\frac{1}{2\pi\I}\oint_{\Gamma_I}dz f(z)$ means that the integration $\Gamma_I$ is a simple anticlockwise oriented path including the poles only the poles of $f(z)$ which are in the set $I$.}
\begin{equation}
K_{n,t}(x,y)=\frac{1}{(2\pi\I)^2}\oint_{\Gamma_0}dw \oint_{\Gamma_{\vec \alpha^{-1}}} dz \frac{e^{t/w} w^x}{e^{t/z} z^{y+1}}\frac{\prod_{\ell=1}^n (1-\alpha_\ell w)}{ \prod_{\ell=1}^n (1-\alpha_\ell z)}\frac{1}{w-z}.
\end{equation}
The contour $w$ goes around $0$ and the contour $z$ includes the poles $\alpha_1^{-1},\ldots,\alpha_n^{-1}$.

In our case we have $x_n(t)=y_{n+M}(t)+M-1$ and $\alpha_n=\alpha$ for $n=1,\ldots,M$ and $\alpha_n=1$ for $n\geq M+1$. Thus we have, for $n\geq 1$,
\begin{equation}
\Pb(x_n(t)\geq \xi)=\det(\Id-\widetilde K_{n,t})_{\ell^2((-\infty,\xi))},
\end{equation}
with
\begin{equation}
\begin{aligned}
\widetilde K_{n,t}(x,y)=\frac{1}{(2\pi\I)^2}\oint_{\Gamma_0}dw \oint_{\Gamma_{1,\alpha^{-1}}} dz \frac{e^{t/w} w^{x+1} (1-w)^n}{e^{t/z} z^{y+2} (1-z)^n}\frac{1}{w-z}\left(\frac{w^{-1}-\alpha^{-1}}{z^{-1}-\alpha^{-1}}\right)^M.
\end{aligned}
\end{equation}
We can deform the $w$-contour to include also the pole $w=z$ without changing the result, because the contribution at $w=z$ is the integral over $z$ of $z^{x-y-1}$, which is zero since the $z$-contour does not include $0$.
Due to the scaling (\ref{eq4.2}), we need to determine the $t\to\infty$ limit of the rescaled kernel
\begin{equation}
K^{\rm resc}_t(s_1,s_2):=\sigma t^{1/2} \widetilde K_{n,t}(x(s_1),x(s_2))
\end{equation}
with $x(s)=(\alpha-\tfrac12)t-2\eta t^{1/2}-s \sigma t^{1/2}$.

If we define the functions
\begin{equation}
\begin{aligned}
f_0(w)&=w^{-1}+(\alpha-\tfrac12)\ln(w)+\tfrac{1-\alpha}{2}\ln(1-w),\\
f_1(w,s)&=-(2\eta+s \sigma)\ln(w)+\eta\ln(1-w),
\end{aligned}
\end{equation}
we have
\begin{equation}
K^{\rm resc}_t(s_1,s_2) = \frac{\sigma t^{1/2}}{(2\pi\I)^2}\oint_{\Gamma_{1,\alpha^{-1}}} dz \oint_{\Gamma_{0,z}}dw \frac{e^{t f_0(w)+t^{1/2} f_1(w,s_1)}}{e^{t f_0(z)+ t^{1/2} f_1(z,s_2)}}\frac{w z^{-2}}{w-z}\left(\frac{w^{-1}-\alpha}{z^{-1}-\alpha}\right)^M.
\end{equation}
A simple computation gives that $\frac{df_0(w)}{w}=0$ for $w=\{2,\alpha^{-1}\}$, with $\Re(f_0(\alpha^{-1}))<\Re(f_0(2))$. For the integration contour of $w$ we choose a steep descent path for $\Re f_0(w)$, which passes at a distance $\delta t^{-1/2}$ to the right of $\alpha^{-1}$, and decompose the contribution from $z$ by (a) the one from $1$ and (b) the one from $\alpha^{-1}$.

For (a) one takes a steep descent path for $-\Re f_0(z)$ around $1$ and passing by $2$. This contribution is going to be exponentially small in $t$ since $\Re(f_0(\alpha^{-1})-f_0(2))<0$.

For (b) one takes the contour $|z-\alpha^{-1}|=\delta t^{-1/2}/2$. By doing the change of variables $z=\alpha^{-1}+\alpha^{-1}\sigma^{-1}Z t^{-1/2}$ and $w=\alpha^{-1}+\alpha^{-1}\sigma^{-1}W t^{-1/2}$. Using
\begin{equation}
\begin{aligned}
t f_0(w)&= t f_0(\alpha^{-1})+\frac12 W^2+\Or(W^3 t^{-1/2}),\\
t^{1/2} f_1(w,s)&=t^{1/2} f_1(\alpha^{-1},s) -W(s+\xi_c)+\Or(W^2 t^{-1/2}),
\end{aligned}
\end{equation}
one finally obtains
\begin{equation}
\lim_{t\to\infty} K^{\rm resc}_t(s_1,s_2) \frac{e^{t^{1/2} f_1(\alpha^{-1},s_2)}}{e^{t^{1/2} f_1(\alpha^{-1},s_1)}} = K_{\rm GUE(M)}(s_1+\xi_c,s_2+\xi_c).
\end{equation}
\end{proof}

\begin{proof}[Proof of Theorem~\ref{ThmMain}]
Rescale the random variables $x_n$, $x^A_n$ and $x_n^B$ all in the same way, namely for any fixed $\eta$,
\begin{equation}
Y_t:=\frac{x_{\frac{1-\alpha}{2}t+\eta t^{1/2}}(t)-\left[(\alpha-\tfrac12)t-2\eta t^{1/2}\right]}{-t^{1/2}},
\end{equation}
and similarly $Y_t^A$ and $Y_t^B$.

Since we divide by $t^{1/2}$ and not by $t^{1/3}$, Proposition~\ref{lemGOEextended} implies that
\begin{equation}\label{eq4.4}
\lim_{t\to\infty}\Pb(Y^A_t\leq \xi)\stackrel{(d)}{=}\Id(\xi\leq 0).
\end{equation}
By Lemma~\ref{lemMinimum} we have
\begin{equation}
\Pb\left(Y_t\leq \xi\right) =\Pb\left(\max\{Y^A_t,Y^B_t\}\leq \xi\right).
\end{equation}
For $\xi<0$,
\begin{equation}
\Pb\left(\max\{Y^A_t,Y^B_t\}\leq \xi\right) \leq \Pb\left(Y^A_t\leq \xi\right)\to 0
\end{equation}
as $t\to\infty$ by (\ref{eq4.4}).
For $\xi>0$,
\begin{equation}
\begin{aligned}
&\Pb\left(\max\{Y^A_t,Y^B_t\}\geq \xi\right) \\
=& \Pb\left(\max\{Y^A_t,Y^B_t\}\geq \xi, Y^A_t<Y^B_t\right)+\Pb\left(\max\{Y^A_t,Y^B_t\}\geq \xi, Y^A_t\geq Y^B_t\right)\\
=& \Pb\left(Y^B_t\geq \xi\right)-\Pb\left(Y^B_t\geq \xi, Y^A_t\geq Y^B_t\right)+\Pb\left(Y^A_t\geq \xi, Y^A_t\geq Y^B_t\right).
\end{aligned}
\end{equation}
The last two terms are further bounded by $\Pb\left(Y^A_t\geq \xi\right)$, leading to
\begin{equation}
0\leq \Pb\left(\max\{Y^A_t,Y^B_t\}\geq \xi\right) - \Pb\left(Y^B_t\geq \xi\right)\leq 2 \Pb\left(Y^A_t\geq \xi\right)\to 0
\end{equation}
as $t\to\infty$ by (\ref{eq4.4}).
\end{proof}

\subsection{Comparison with the direct computation}\label{SectComparison}
In this final section we explain why the direct approach of computing the Fredholm determinant in presence of shock is difficult. For simplicity we discuss the $M=1$ case, which corresponds to the situation of the statement in Proposition~1 of~\cite{BFS09}. By Corollary~11 of~\cite{BFS09} we have
\begin{equation}\label{eq4.18}
\Pb(x_n(t)\geq x)=\det(\Id-\chi_x K_{n,t} \chi_x - \chi_x g\otimes f \chi_x)_{\ell^2(\Z)},
\end{equation}
where $\chi_x(y)=\Id_{y<x}$ and
\begin{equation}
\begin{aligned}\label{eqKernel1}
K_{n,t}(x,y)&=\frac{1}{(2\pi\I)^2}\oint_{\Gamma_0}dv \oint_{\Gamma_{0,-v}}\frac{d w}{w}\frac{e^{t w} (w-1)^{n-1}}{w^{x+n-1}}\frac{(1+v)^{y+n-1}}{e^{t(v+1)}v^{n-1}}\frac{(1+2v)}{(w+v)(w-v-1)} \\
f(x)&=\frac{1}{2\pi\I}\oint_{\Gamma_0}dw \frac{e^{t w} (w-1)^{n-1}}{w^{x+n}} \\
g(y)&=\frac{1}{2\pi\I}\oint_{\Gamma_{-1,\alpha-1}}dv \frac{(1+v)^{y+n-1}}{e^{t(v+1)} v^{n-1}}\frac{1+2v}{(v+1-\alpha)(v+\alpha)}.
\end{aligned}
\end{equation}
The idea is to rewrite (\ref{eq4.18}) as
\begin{equation}\label{eq4.20}
\begin{aligned}
\Pb(x_n(t)\geq x)&=\det(\Id-\chi_x K_{n,t} \chi_x) \\
&\times \left(1-\langle g \chi_x, f\rangle-\langle g \chi_x, (\Id-\chi_x K_{n,t} \chi_x)^{-1} \chi_x K_{n,t}\chi_x f\rangle\right)
\end{aligned}
\end{equation}
Due to the scaling (\ref{eq2.6}), let $x(s)=(\alpha-\tfrac12)t-\sigma s t^{1/2}$. Then, for $\xi>0$, one has that the rescaled correlation kernel
\begin{equation}\label{eq4.21}
\left|\sigma t^{1/2} K_{n,t}(x(s_1),x(s_2))\frac{e^{\kappa(x(s_1))}}{e^{\kappa(x(s_2))}} \right|\leq C e^{-c (s_1+s_2)t^{1/6}}
\end{equation}
for all $s_1,s_2\geq 0$, where $e^{\kappa(x(s_1))-\kappa(x(s_2))}$ is a conjugation, and $C,c\in(0,\infty)$ are constants.
This bound implies by standard arguments that $\det(\Id-\chi_x K_{n,t} \chi_x)=1-\Or(e^{-2 c \xi t^{1/6}})$. The scalar product $\langle g \chi_x, f\rangle$ can be computed explicitly. It is a single contour integral and simple steep descent computations leads to the final result (see (4.38) of~\cite{BFS09}). Thus to determine the limit for $\xi>0$ one has to show that the last term in (\ref{eq4.20}) goes to $0$ as $t\to\infty$.

In the sketch of the proof of Proposition~1 it is argued that this is the case since $K_{n,t}\to 0$, see (\ref{eq4.21}). One would think that it is enough to bound the scalar product using Cauchy-Schwarz, i.e., bounding the term by
\begin{equation}
\|g\chi_x\| \frac{1}{1-\|\chi_x K_{n,t} \chi_x\|} \|\chi_x K_{n,t}\chi_x f\|.
\end{equation}
Unfortunately this does not work. The reason is that if we consider the conjugation as in (\ref{eq4.21}) such that $K_{n,t}$ goes to $0$, then one has to use the same conjugation for $g$ and $f$. However, rigorous steep descent analysis can lead to bounds on the conjugated $\|g\chi_x\|$ which diverges as $t\to\infty$. Unfortunately, this divergence is not compensated by a decay in the bound the conjugated $\|\chi_x K_{n,t}\chi_x f\|$.

In principle one could expand $(\Id-\chi_x K_{n,t} \chi_x)^{-1}$ as Neumann series. Then \emph{first} evaluate the products of the functions and operators, leading to a integral representation and \emph{secondly} do the asymptotic analysis on the integrals. For the first term of the series, namely for $\langle g \chi_x, K_{n,t}\chi_x f\rangle$, one has a 4-fold contour integral with several poles. Still it was shown in~\cite{Hin17} that this terms goes to $0$ as $t\to\infty$. The control of all the terms in the Neuman series will become a very tedious analytic task, as the number of integrals will grow (linearly) with the power in the Neumann series. This shows very nicely the usefulness of the decoupling argument used in this and other papers, which is inspired by the physical behavior of the system.


\end{document}